\documentclass[letterpaper, 10 pt, conference]{ieeeconf} 
\IEEEoverridecommandlockouts  
\usepackage{amsmath}
\usepackage{amssymb}
\usepackage{amsthm}
\usepackage{mathtools}
\usepackage{derivative}
\NewDerivative{\tpdv}{\partial}[style-frac=\tfrac]
\usepackage{xcolor}
\usepackage{bm}

\usepackage[noadjust]{cite}
\usepackage{url}

\usepackage{graphicx}
\usepackage[export]{adjustbox} % For aligning figures
\graphicspath{{figures/}}

\newtheorem{theorem}{Theorem}
\newtheorem{lemma}{Lemma}

\theoremstyle{definition}
\newtheorem{definition}{Definition}

\newtheorem{remark}{Remark}

\newcommand{\R}{\mathbb{R}}

\newcommand{\T}{^\top}

\newcommand{\ba}{\mathbf{a}}
\newcommand{\bb}{\mathbf{b}}

\newcommand{\be}{\mathbf{e}}
\renewcommand{\bf}{\mathbf{f}} % NOTE: use \textbf if you really want to write bold non-math text.
\newcommand{\bg}{\mathbf{g}}

\newcommand{\bk}{\mathbf{k}}

\newcommand{\bp}{\mathbf{p}}

\newcommand{\bu}{\mathbf{u}}
\newcommand{\bv}{\mathbf{v}}

\newcommand{\bx}{\mathbf{x}}

\newcommand{\bA}{\mathbf{A}}
\newcommand{\bB}{\mathbf{B}}

\newcommand{\bG}{\mathbf{G}}

\newcommand{\bI}{\mathbf{I}}
\newcommand{\bJ}{\mathbf{J}}

\newcommand{\bM}{\mathbf{M}}

\newcommand{\bR}{\mathbf{R}}

\newcommand{\bxi}{\bm{\xi}}

\newcommand{\bomega}{\bm{\omega}}

\newcommand{\bnu}{\bm{\nu}}

\usepackage{xcolor}
\definecolor{myblue}{RGB}{49, 114, 174}
\definecolor{myred}{rgb}{0.796, 0.235, 0.2}
\definecolor{mygreen}{rgb}{0.22, 0.596, 0.149}
\definecolor{mypurple}{rgb}{0.584,0.345,0.698}

\usepackage{blindtext}
\title{\textbf{Vector Field Following for Quadrotors via Reduced Attitude Representations with Applications to Safety-Critical Control}}

\author{Mohammad Mirtaba and Max H. Cohen%
\thanks{Department of Electrical and Computer Engineering, North Carolina State University, Raleigh, NC, \texttt{\{mmirtab,mhcohen2\}@ncsu.edu}.}
}

\begin{document}
\maketitle
\begin{abstract}
    Vector field tracking for quadrotors provides a systematic framework for specifying complex behaviors at the translational level and transferring them to the full vehicle dynamics. Objectives such as safety, for example, can be encoded into a low-dimensional vector field, 
    which is then embedded into the full dynamics through appropriate tracking conditions. 
    The development of such tracking controllers for quadrotors can be facilitated by the fact that their translational dynamics depend only on the thrust direction and are independent of the vehicle's heading. Thus, reduced-attitude representations, leveraging the geometry of the unit sphere, provide a natural framework for vector field tracking on quadrotors. In this work, we propose two vector field tracking controllers for quadrotors using reduced-attitude representations based on dynamic extension and geometric control, respectively. Next, we illustrate how safety can be encoded into the desired vector field using control barrier functions and enforced for the full system. The effectiveness of the proposed approaches is demonstrated via simulations and experimental flight tests.
\end{abstract}

\section{Introduction}
% The use of quadrotors has expanded across a wide range of applications and has reached the commercial market making them readily available for consumer use in areas such as package delivery, inspection, and surveillance. 
Although commercially available quadrotors and flight controllers provide satisfactory flight stability and maneuverability, ensuring safety during operation remains a significant challenge. Quadrotors are increasingly deployed in urban environments, often in cooperation with other robotic systems. Thus, imposing safety constraints is essential to prevent hazardous flight operations that could endanger nearby individuals, other robots, or surrounding infrastructure.

% Quadrotor control techniques initially relied on Euler angle representations and independent proportional-integral-derivative (PID) controllers for each control loop in a cascaded attitude-position architecture \cite{bouabdallah2007full}. These methods later leveraged the geometry of $\mathrm{SO}(3)$, the space of rotation matrices, to avoid singularities associated with Euler-angle representations, and to provide almost global stability \cite{lee2010geometric}.

Controllers leveraging attitude dynamics on $\mathrm{SO}(3)$~\cite{lee2010geometric} require regulation of the entire orientation, including the heading direction (the rotation about the thrust vector), even though the heading does not directly influence the translational dynamics. Here, heading control authority is generated through differential rotor reaction torques, which are weaker than the moments used for tilting maneuvers~\cite{gamagedara2019geometric}.

Aggressive heading tracking may impose additional demands on actuators that could lead to saturation and degradation of translational stability.
% particularly during aggressive maneuvers. 
Since the thrust direction is the primary variable governing the translational dynamics, reduced-attitude approaches, leveraging the geometry of the unit sphere, have been developed to stabilize the thrust direction while leaving the heading 
% degree of freedom 
unconstrained~\cite{leomanni2024almost}.

Although recent reduced-attitude control approaches decouple translational dynamics from heading control, most existing designs rely on time-scale separation arguments, requiring significantly faster attitude tracking. This limitation, commonly encountered in cascaded controllers, has also been observed in full attitude control formulations on $\mathrm{SO}(3)$,
\begin{figure}[t]
    \centering
    \vspace{2.5mm}
    \includegraphics[width=1\columnwidth]{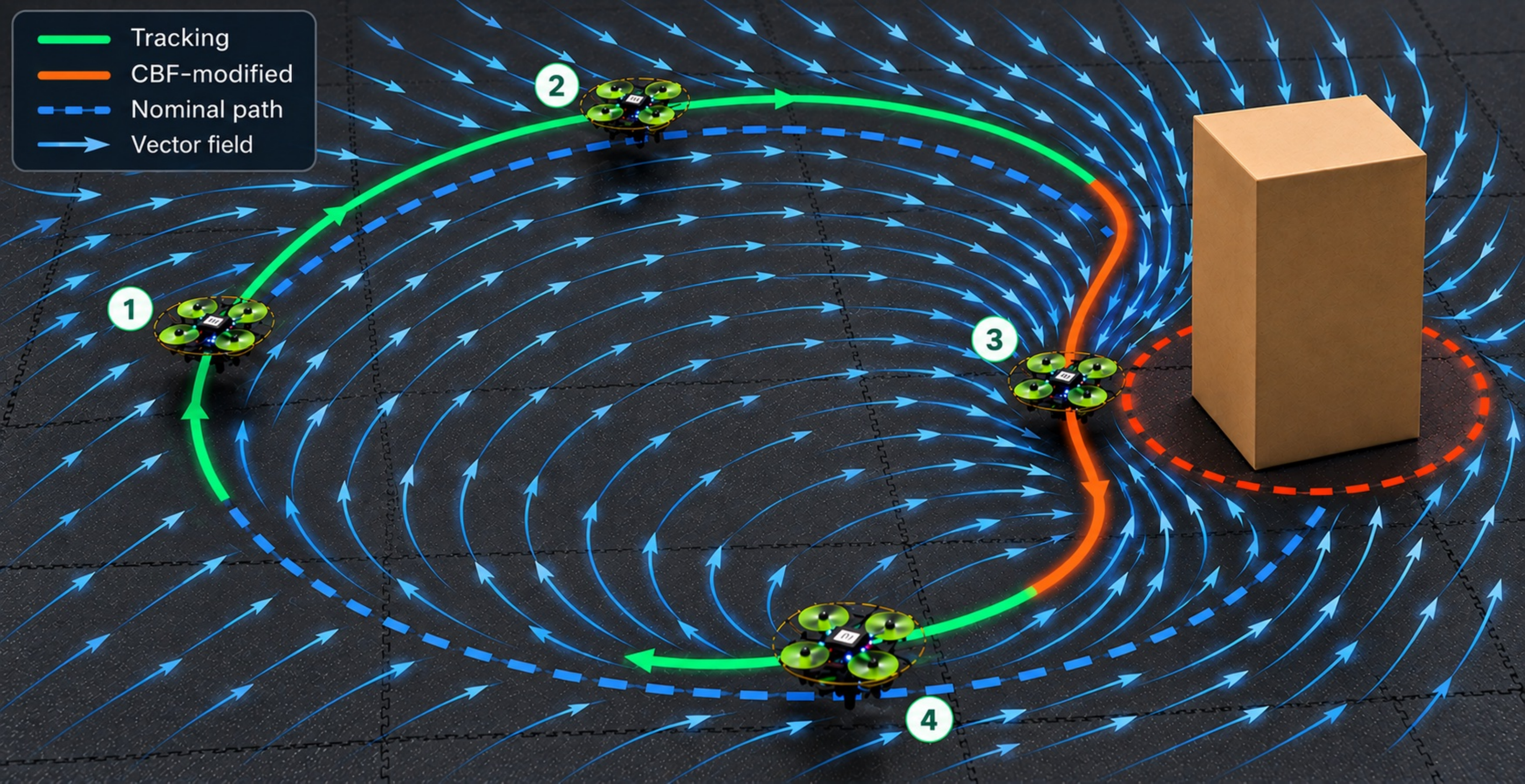}
    \vspace{-6mm}
    \caption{Safety-critical vector field following for quadrotors via reduced attitude representations. Video available at https://youtu.be/Fn9biWWtrd8.}
    \label{fig:exp_video}
    \vspace{-6mm}
\end{figure}
which has motivated recent research that explicitly considers the coupling between translational and rotational dynamics. Here, stability of the resulting cascade is often analyzed by treating the attitude tracking error as an input to the translational subsystem, and requiring the translational dynamics to satisfy input-to-state stability (ISS) \cite{roza2014class} or integral ISS (iISS) \cite{invernizzi2019integral} properties. Although these properties are powerful for establishing stability of interconnected systems, they mainly offer qualitative guarantees
in the sense that
% . As a result, 
obtaining quantitative performance measures, such as convergence rates, can be challenging \cite{sontag2008input}. Moreover, enforcing such conditions may limit the selection of more aggressive position control laws and potentially affect transient performance.

The literature on quadrotor control is now quite mature, but much research has focused on stabilization tasks. As noted earlier, however, recent advancements necessitate the consideration of safety specifications. 
% such as the strict satisfaction of state constraints.
Safety-critical control designs for quadrotors are facilitated by techniques such as reachability analysis \cite{reach_safe} and model predictive control \cite{mpc_safe}, which typically incur higher computational costs, and control barrier functions (CBFs) \cite{AmesTAC17}.
% which can inherit many of their underlying limitations. 
Examples of these aforementioned approaches include trajectory optimization enforcement methods \cite{landry2016aggressive} that require high computational cost and neglect low level dynamics, model predictive path integral based method \cite{jin2023safety}, CBFs designed for planar quadrotors \cite{planerquad}, leveraging cascaded architectures for safety \cite{cascadequad}, and exploiting the equivalence between integrator chains and quadrotor dynamics via differential flatness \cite{doeser2020invariant}.

% \MC{Think we need to add another paragraph here on vector field tracking and its connections to CBFs. Things to discuss here would be \cite{SchwagerICRA14,BahatiCDC25}, among others.}

Vector field tracking in robotic systems provides a means of specifying desired behaviors at a higher level and transferring these behaviors to systems with complex dynamics \cite{SchwagerICRA14}. This perspective is particularly relevant to safety-critical control, where safety-preserving behaviors can be synthesized on simplified models and subsequently realized by the full-order system through rigorous tracking guarantees. For example, \cite{TamasRAL22} synthesize CBFs on reduced-order models and then lift the resulting safety guarantees to the full nonlinear system by certifying convergence of the nonlinear dynamics to the reduced-order model (i.e., the vector field).

In this paper, we leverage reduced-attitude representations to design quadrotor controllers that enforce exponential convergence of the translational dynamics to a desired \emph{vector field} that carries safety properties. In particular, we 1) present two designs based on dynamic extension (DE) and geometric control, respectively; 2) explicitly address translational--rotational coupling terms that arise in existing formulations, which present roadblocks to rigorous stability proofs; 3) exploit the resulting exponential tracking guarantees to incorporate safety constraints into the quadrotor control architecture. In addition to theoretical results, we deploy the proposed controllers on a Crazyflie in experimental flight tests.

\section{Preliminaries and Problem Formulation}
In this section, we review quadrotor dynamics, their reduced-attitude representations, and CBFs.
% \subsection{Notation}
Herein, $\mathrm{SO}(3)$ denotes the special orthogonal group $
\mathrm{SO}(3) \coloneqq \{ \bR \in \mathbb{R}^{3\times 3} \mid \bR^\top \bR = \bI,\ \det(\bR)=1 \}
$ and $S^2\coloneqq \{\bx\in\R^3\mid\|\bx\|=1\}$ is the unit sphere. The Lie algebra of $\mathrm{SO}(3)$ is
$
\mathfrak{so}(3) \coloneqq \{ \bA \in \mathbb{R}^{3\times 3} \mid \bA^\top = -\bA \}
$. The operator $[\cdot]_\times:\mathbb{R}^3\rightarrow\mathfrak{so}(3)$ maps a vector to a skew-symmetric matrix, satisfying $
[\ba]_\times\bb=\ba\times\bb, \forall\,\ba,\bb\in\mathbb{R}^3.$ We use $\|\cdot\|$ for the Euclidean norm.

\subsection{Full Quadrotor Model}

The states of a quadrotor are typically described by $\bx=(\bp,\bv,\bR,\bomega)$, where $\bp,\bv\in\mathbb{R}^3$, $\bR\in \mathrm{SO}(3)$, and $\bm\omega \in \mathfrak{so}(3)$ denote the position and velocity of the center of mass, the attitude, and angular velocity of the quadrotor in the body-fixed frame, which lies in the Lie algebra of $\mathrm{SO}(3)$. The attitude is represented by a rotation matrix $\bR\in\mathrm{SO}(3)$. A $3$D vector can be associated with each element of the Lie algebra hence, we represent the body-fixed frame angular velocity as $\bm \omega \in \mathbb{R}^3$, and the inertial frame angular velocity as $\bomega^\mathrm{i} \coloneqq \bR \bomega\in \mathbb{R}^3$. The control input $\bu=(T,\bM)$ consists of the total thrust $T\in\mathbb{R}$, and the moments $\bM\in\mathbb{R}^3$ in the body-fixed frame. The dynamics of the quadrotor are then:
\begin{equation}
\label{eq:full_model}
 \dot{\bx}=
 \begin{bmatrix}
     \dot{\bp} \\ \dot{\bv} \\ \dot{\bR} \\ \dot{\bomega}
 \end{bmatrix}
 =
 \left[\begin{array}{c}
\bv \\
\frac{1}{m}\bR\bb_3T + \bg \\
\bR[\bm \omega]_{\times} \\
\bJ^{-1} (\bM - \bm\omega \times \bJ\bm \omega )
\end{array}\right].
\end{equation}
% where, for brevity of notation, we will sometimes use:
% \begin{align}
%     f &\coloneqq \frac{T}{m},\quad
%     \bm \tau \coloneqq \bJ^{-1} (\bM - \bm\omega \times \bJ\bm \omega ).
% \end{align}
where $\bb_3=(0,0,1)\in\mathbb{R}^3$ and $\bg = -g\bb_3\in\mathbb{R}^3$ is the gravitational acceleration vector, where $g>0$ denotes the gravitational acceleration magnitude. Furthermore, $m\in\mathbb{R}_{>0}$ is the vehicle mass, and $\bJ\in\mathbb{R}^{3\times3}$ is the positive-definite inertia matrix expressed in the body-fixed frame.

\subsection{Reduced Attitude Representation}
The translational dynamics $(\dot{\bp}, \dot{\bv})$ depend on the attitude only through the quantity $\bR\bb_3$, which corresponds to the body-fixed thrust axis expressed in the inertial frame. Consequently, the translational motion is independent of rotations about the thrust axis, motivating a reduced-attitude representation in which the orientation is described solely by the thrust direction, potentially simplifying control design. The thrust directions belongs to the unit sphere $S^2$, and represented as a unit vector in $\bR^3$. Accordingly, we define:
\begin{align}
 \bR_3 &\coloneqq \bR\bb_3,\\
 \dot \bR_3 &\coloneqq \bR[\bm\omega]_\times \bb_3=\bR_1 \omega_2 - \bR_2 \omega_1,
\end{align}
where $\bR_i$ is the $i$-th column of the rotation matrix and $\omega_i$ is the $i$-th element of the angular velocity vector. The dynamics expressed in terms of the reduced attitude with the  state $\bx =(\bp, \bv, \bR_3, \bm \omega)\in\R^3\times\R^3\times S^2\times \R^3$ are:
\begin{equation}
    \label{eq:reduced_model}
     \dot{\bx}=
     \begin{bmatrix}
     \dot{\bp} \\ \dot{\bv} \\ \dot{\bR} \\ \dot{\bomega}
     \end{bmatrix}=
     \left[\begin{array}{c}
    \bv \\
     \frac{1}{m}\bR_3T + \bg \\
    -\bR_2(t) \omega_{1}+ \bR_1(t) \omega_{2} \\
    \bJ^{-1} (\bM - \bm\omega \times \bJ\bm \omega )
    \end{array}\right].
\end{equation}
The vectors $\bR_1$ and $\bR_2$ satisfy the rotational kinematics of the full quadrotor model. Since the heading dynamics are decoupled from the reduced-order system and are assumed to be generated by an independent heading controller, $\bR_1$ and $\bR_2$ are treated as known time-varying variables.

In this paper, we treat the angular velocity $\bomega$, rather than the moment $\bM$, as a control input, yielding a system with state $\bx=(\bp,\bv,\bR_3)\in\R^3\times\R^3\times S^2$ and dynamics:
\begin{equation}
\label{eq:reduced_model_omega}
 \dot{\bx}=
 \begin{bmatrix}
     \dot{\bp} \\ \dot{\bv} \\ \dot{\bR}_3
 \end{bmatrix}=
 \left[\begin{array}{c}
\bv \\
 \frac{1}{m}\bR_3T + \bg \\
-\bR_2(t) \omega_{1}+ \bR_1(t) \omega_{2}
\end{array}\right],
\end{equation}
with control input $\bu=(T,\omega_{1},\omega_{2})$.

% \MC{I think we should have two more subsections here. One providing some preliminaries on CBFs and safety-critical control. One that discusses the vector field tracking problem, explains why we we are considering this vector field paradigm rather than traditional time-based references, and a formal problem statement, something along the lines of "given a smooth desired vector field, design a controller such that $v$ approaches this desired vector field.}

\subsection{Control Barrier Functions}
Consider a nonlinear time-varying control affine system:
\begin{align}
\dot{\bx} = \bf(\bx ,t) + \bG(\bx, t)\bu,
\label{eq:general_knowen_nonlinear_system}
\end{align}
with state $\bx \in \mathbb{R}^n$, control input $\bu \in \mathbb{R}^m$, drift dynamics $\bf:\mathbb{R}^n\times\R \rightarrow \mathbb{R}^n$, and control directions $\bG:\mathbb{R}^n \times\R\rightarrow \mathbb{R}^{n\times m}$.
If these dynamics are locally Lipschitz in $\bx$ uniformly in $t$, then any control $\bk:\R^n\times\R\rightarrow\R^m$, locally Lipschitz in $\bx$ uniformly in $t$,  produces a closed-loop system:
\begin{equation}\label{eq:closed-loop-dyn}
\dot{\bx} = \bf_{\mathrm{cl}}(\bx, t) \coloneqq \bf(\bx, t) + \bG(\bx, t)\bk(\bx, t),
\end{equation}
that is also locally Lipschitz in $\bx$ uniformly in $t$.
Hence, for any initial condition $\bx_0$ and time $t_0$, there exists a maximal interval of existence $I(\bx_0):=[t_0,\tau_{\max})$ such that $t\mapsto\bx(t)$ is the unique solution to \eqref{eq:closed-loop-dyn} on $I(\bx_0)$; cf. \cite[Ch.~3]{Khalil}.

In the CBF methodology, safety is formally established through forward set invariance \cite{AmesTAC17,KondaLCSS21}. A set $\mathcal{C}(t)$ is forward invariant for \eqref{eq:closed-loop-dyn} if, for any initial condition $\bx_0\in\mathcal{C}(t_0)$, the resulting trajectory satisfies $\bx(t)\in\mathcal{C}(t)$ for all $t\in I(\bx_0)$. When using CBFs, safe sets are defined as the zero superlevel set of a continuously differentiable function $h : \R^n\times\R\rightarrow\R$:
\begin{equation}
\begin{aligned}
\mathcal{C}(t) \coloneqq & \{\bx \in \mathbb{R}^n \mid h(\bx, t) \ge 0\}.
\label{eq:C}
\end{aligned}
\end{equation}
CBFs provide conditions for synthesizing feedback controllers that satisfy safety properties.
% Specifically, by providing conditions under which there exists a control input satisfying the barrier condition.
\begin{definition}[\cite{KondaLCSS21}]
    A continuously differentiable function $h:\R^n\times \R\rightarrow\R$ defining $\mathcal{C}$ as in \eqref{eq:C} is said to be a CBF for \eqref{eq:general_knowen_nonlinear_system} if there exists\footnote{Here, $\alpha : \mathbb{R} \to \mathbb{R}$ is an extended class $\mathcal{K}_\infty^\text{e}$ function, i.e., $\alpha$ is continuous, strictly increasing, $\alpha(0)=0$, and $\lim_{r\to\pm\infty}\alpha(r)=\pm\infty$.} $\alpha\in\mathcal{K}_\infty^\text{e}$ such that:
    \begin{align}
        \sup_{\bu\in\R^m} \dot{h}(\bx,\bu, t)
        \geq -\alpha\big(h(\bx, t)\big),\quad \forall (\bx,t)\in\R^n\times\R,
        \label{eq:cbf}
    \end{align}
    where:
    \begin{equation*}
        \dot{h}(\bx,\bu, t) = \tpdv{h}{t}(\bx, t) + \tpdv{h}{\bx}(\bx, t)\left[\bf(\bx,t) + \bG(\bx,t)\bu \right].
    \end{equation*}
\end{definition}
The main utility of CBFs is their ability to instantiate \emph{safety filters}, which minimally modify a nominal controller to enforce safety, typically via quadratic programming \cite{AmesTAC17}. 

\subsection{Problem Formulation}
The main objectives of this paper are twofold. First, given a smooth time-varying vector field $\bv_{\rm{d}}\,:\,\R^3\times\R\rightarrow\R^3$ that encodes a desired behavior for a quadrotor's center of mass $\dot{\bp} = \bv_{\rm{d}}(\bp, t)$, design a feedback controller $\bu = \bk(\bx, t)$ such that along closed-loop trajectories, $\bv(t)$ exponentially converges to $\bv_{\rm{d}}(\bp(t), t)$.
Second, given a safety constraint on the quadrotor's position, derive conditions on the vector field $\bv_{\rm{d}}$ and feedback controller $\bk$ such that completion of Objective 1 leads to safety of the closed-loop system.
% The main objective is to design a controller such that the quadrotor's center-of-mass velocity $\bv$ tracks a given smooth desired vector field $\bv_{\mathrm{d}}(\bp)$ with a prescribed exponential convergence rate. By ensuring exponential convergence to the desired vector field, the properties encoded in the desired velocity profile can be transferred to the full nonlinear quadrotor dynamics, which may include position tracking or safety specifications depending on the design objective.

\section{Vector Field Tracking}
In this section, we present two approaches to driving the quadrotor's center-of-mass velocity $\bv$ exponentially to a desired vector field $\bv_{\mathrm{d}}(\bp, t)$. Our first approach leverages dynamic extension, and the second approach employs geometric properties of the reduced-attitude representation. 

\begin{remark}
   For brevity, our designs are carried out up to the angular velocity level. Generalizing our designs to the moment level (i.e., with $\bM$ as the input) requires one additional backstepping step.
\end{remark}

\subsection{Dynamic Extension}
Dynamic extension is a common technique in quadrotor control that allows for delaying the appearance of the translational subsystem input (thrust), so that it enters the dynamics at the same level as the rotational subsystem. This endows the position with a uniform relative degree with respect to both the translational and rotation inputs. 
% giving a uniform relative degree to the system, and an invertible input map. 
We dynamically extend \eqref{eq:reduced_model_omega} by appending the thrust $T$ to the original state $\bx$, creating the dynamically extended state $\bxi \coloneqq (\bp, \bv,\bR_3,T)$ and control input $\bnu \coloneqq (T_{e}, \omega_1,\omega_2)$, where $T_{e}\coloneqq \dot{T}$.
Thus, the dynamically extended system is given by:
\begin{equation}\label{eq:de-dyn}
    \dot{\bxi} = 
    \begin{bmatrix}
        \dot{\bp} \\ \dot{\bv} \\ \dot{\bR}_3 \\ \dot{T}
    \end{bmatrix}
    =
    \begin{bmatrix}
        \bv \\ \frac{1}{m}\bR_{3}T + \bg \\ -\bR_{2}(t)\omega_1 + \bR_1(t)\omega_2 \\ T_{e}
    \end{bmatrix},
\end{equation}
which, as we demonstrate below, is in a form amenable to backstepping. To this end, define the error coordinates:
\begin{equation}
\label{eq:erorrs_de}
    \begin{aligned}
        \be_1 \coloneqq \bv - \bv_{\rm{d}}(\bp, t), \qquad
        \be_2 \coloneqq \tfrac{1}{m}\bR_{3}T - \ba_{\rm{d}}(\bp,\bv,t),
    \end{aligned}
\end{equation}
where $\bv_{\rm{d}}(\bp,t)$ is a smooth desired vector field and:
\begin{equation}\label{eq:acceleration-profile}
    \ba_{\rm{d}}(\bp, \bv, t) \coloneqq \dot{\bv}_{\rm{d}}(\bp, t) - \bg - \tfrac{k_1}{2}(\bv - \bv_{\rm{d}}(\bp,t)),\; k_1 > 0,
\end{equation}
is a desired acceleration profile. The error dynamics are:
% which have the dynamics:
\begin{equation}\label{eq:de_error_dyn}
    \begin{bmatrix}
        \dot{\be}_1 \\ \dot{\be}_2
    \end{bmatrix}
    =
    \begin{bmatrix}
        \frac{1}{m}\bR_3T + \bg - \dot{\bv}_{\rm{d}}(\bp, t) \\ 
        \bA(\bxi, t)\bB(\bxi)\bnu - \dot{\ba}_{\rm{d}}(\bp, \bv, t)
    \end{bmatrix},
\end{equation}
where:
\begin{equation}
    \bA(\bxi, t) \coloneqq \begin{bmatrix}
        \bR_{3} & -\bR_{2}(t) & \bR_{1}(t)
    \end{bmatrix},
\end{equation}
is an orthogonal matrix and:
% $\bA(\bxi,t)\in\mathrm{SO}(3)$,0
\begin{equation}
    \bB(\bxi) \coloneqq \mathrm{diag}\left(\frac{1}{m}, \frac{T}{m},\frac{T}{m}\right).
\end{equation}
Now, consider the controller:
\begin{equation}\label{eq:de-tracking-controller}
    \begin{aligned}
        \bnu = \bB(\bxi)^{-1}\bA(\bxi,t)\T\Big[ & \dot{\ba}_{\rm{d}}(\bxi,t) - \sigma\be_1 - \tfrac{k_2}{2}\be_2\Big],
    \end{aligned}
\end{equation}
where $\sigma,k_2>0$. 
% and:
% \begin{equation}
%     \ba_{\rm{d}}(\bp, \bv, t) \coloneqq \dot{\bv}_{\rm{d}}(\bp, t) - \bg - \tfrac{\lambda_1}{2}(\bv - \bv_{\rm{d}}).
% \end{equation}
% The interpretation of $\ba_{\rm{d}}$ is that of a desired acceleration profile.
Note that $\bnu$ in \eqref{eq:de-tracking-controller} is well-defined if $T>0$. To this end, define the set:
\begin{equation}\label{eq:de-state-space}
    \mathcal{M} \coloneqq \R^3\times\R^3\times S^2 \times \R_{>0},
\end{equation}
as the state space of \eqref{eq:de-dyn}, noting that $\mathcal{M}$ is a smooth manifold since $S^2$ is a smooth manifold and $\mathcal{M}$ is defined as a finite product of $S^2$ with open sets in Euclidean space. The closed-loop dynamics, consisting of \eqref{eq:de-dyn} with controller \eqref{eq:de-tracking-controller}, are well-defined and smooth on $\mathcal{M}$ so long as $\bv_{\rm{d}}$ is smooth. Thus, it follows from \cite[Thm~9.12]{LeeSmoothManifolds} that for each initial condition $\bxi_0\in\mathcal{M}$ there exists a unique solution $\bxi\,:\,I(\bxi_0)\rightarrow\mathcal{M}$ to the closed-loop system, defined on the maximal interval of existence $I(\bxi_0)=[t_0, \tau_{\max})$. The following theorem illustrates that \eqref{eq:de-tracking-controller} enforces exponential decay of the tracking errors \eqref{eq:erorrs_de} along trajectories.
% %
% We make the following common assumption when working with dynamically extended quadrotor dynamics.
% %
% \begin{assumption}\label{assumption:positive-thrust}
%     Given the controller in \eqref{eq:de-tracking-controller}, the trajectories of the closed-loop system \eqref{eq:de-dyn} satisfy:
%     \begin{equation*}
%         T(t_0) > 0 \implies T(t) > 0,\quad \forall t\in I(\bxi_0),
%     \end{equation*}
%     where $I(\bxi_0)=[t_0, \tau_{\max})$ denotes the maximal interval of existence of the closed-loop system \eqref{eq:de-dyn}.
%     % $T(t)> 0$ for all $t\geq t_0$ so long as $T(t_0) > 0$.
% \end{assumption}

% Assumption \ref{assumption:positive-thrust} requires the thrust to be positive for all time. This is challenging to verify a priori, but is arguably not restrictive in practice as zero thrust implies the quadrotor is in free-fall. 
% %
% The following theorem illustrates that \eqref{eq:de-tracking-controller} enforces exponential decay of the tracking errors \eqref{eq:erorrs_de} along closed-loop trajectories.
% exponentially drives \eqref{eq:de-dyn} to a desired vector field $\bv_{\rm{d}}$.

\begin{theorem}
\label{th:de_controller}
Consider system \eqref{eq:de-dyn} with $\bxi\in\mathcal{M}$ as in \eqref{eq:de-state-space} and let $\bv_{\mathrm d}(\bp,t)$ be smooth. 
% If Assumption~\ref{assumption:positive-thrust} holds, 
Then, the controller \eqref{eq:de-tracking-controller}
enforces exponential convergence of the error dynamics \eqref{eq:de_error_dyn} along trajectories in that for any initial condition $\bxi_0\in\mathcal{M}$:
% satisfying $T(t_0) > 0$ :
\begin{equation}\label{eq:de-exp-bounds}
    \|\be(t)\| \leq \max\left\{\sqrt{\sigma},\tfrac{1}{\sqrt{\sigma}} \right\}\|\be(t_0)\|e^{-k(t - t_0)},
\end{equation}
for all $t\in I(\bxi_0)$,
where $\be\coloneqq(\be_1,\be_2)$ and $k\coloneqq \min\{\tfrac{k_1}{2},\tfrac{k_2}{2}\}$.
\end{theorem}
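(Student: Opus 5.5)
We will use a weighted quadratic Lyapunov function in the error coordinates \eqref{eq:erorrs_de} and compute its derivative along closed-loop solutions on $I(\bxi_0)$. Since $\bv_{\rm d}$ is smooth, the profile $\ba_{\rm d}$ in \eqref{eq:acceleration-profile} is smooth in $(\bp,\bv,t)$, so $\dot\ba_{\rm d}$ is a well-defined function of $(\bxi,t)$ along solutions. Because $T>0$ on $\mathcal{M}$, the input in \eqref{eq:de-tracking-controller} is well-defined.

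The first step is to rewrite the error dynamics \eqref{eq:de_error_dyn} in closed loop. For $\dot\be_1$, substitute $\tfrac1m\bR_3T=\be_2+\ba_{\rm d}$ and use \eqref{eq:acceleration-profile}:
\begin{equation*}
\dot\be_1=\be_2+\ba_{\rm d}+\bg-\dot\bv_{\rm d}=-\tfrac{k_1}{2}\be_1+\be_2 .
\end{equation*}
For $\dot\be_2$, use the orthogonality of $\bA$, which holds because $(\bR_1,\bR_2,\bR_3)$ are orthonormal columns of $\bR$, so $\bA\bA\T=\bI$. Together with the invertibility of $\bB$, this collapses $\bA\bB\bnu$ to the bracket in \eqref{eq:de-tracking-controller}, giving
\begin{equation*}
\dot\be_2=-\sigma\be_1-\tfrac{k_2}{2}\be_2 .
\end{equation*}
The closed-loop error system is therefore linear and time-invariant, with a skew-symmetric coupling once $\be_1$ is weighted by $\sigma$. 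The one delicate point is checking that $\tfrac{d}{dt}(\tfrac1m\bR_3T)=\bA\bB\bnu$. This holds since $\tfrac{d}{dt}(\bR_3T)=\bR_3T_e+T(-\bR_2\omega_1+\bR_1\omega_2)$, which matches the column ordering of $\bA$ and the diagonal entries of $\bB$.

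Next, take $V(\be)=\tfrac{\sigma}{2}\|\be_1\|^2+\tfrac12\|\be_2\|^2$. The cross terms $\sigma\be_1\T\be_2-\sigma\be_2\T\be_1$ cancel, leaving
\begin{equation*}
\dot V=-\tfrac{\sigma k_1}{2}\|\be_1\|^2-\tfrac{k_2}{2}\|\be_2\|^2\le -2kV,
\end{equation*}
with $k=\min\{k_1/2,k_2/2\}$. By the comparison lemma, $V(\be(t))\le V(\be(t_0))e^{-2k(t-t_0)}$ for all $t\in I(\bxi_0)$.

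Finally, use the sandwich bounds $\tfrac12\min\{\sigma,1\}\|\be\|^2\le V\le\tfrac12\max\{\sigma,1\}\|\be\|^2$. These give
\begin{equation*}
\|\be(t)\|\le\sqrt{\max\{\sigma,1\}/\min\{\sigma,1\}}\;\|\be(t_0)\|e^{-k(t-t_0)} .
\end{equation*}
If $\sigma\ge1$ the ratio is $\sigma$; if $\sigma<1$ it is $1/\sigma$. In either case the square root equals $\max\{\sqrt\sigma,1/\sqrt\sigma\}$, which is \eqref{eq:de-exp-bounds}.

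None of these steps is a serious obstacle. The main care is in the bookkeeping of the first step: verifying that the $\dot\be_1$ expression closes exactly via the definition of $\ba_{\rm d}$, and confirming the orthogonality of $\bA$. The statement is only claimed on $I(\bxi_0)$, so we need not show that $T$ stays positive for all time.
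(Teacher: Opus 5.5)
Your proof is correct and matches the paper's argument: the paper uses $V=\tfrac12\|\be_1\|^2+\tfrac{1}{2\sigma}\|\be_2\|^2$, which is your Lyapunov function divided by $\sigma$, so it yields the same $\dot V\le-2kV$ after the cross terms cancel and the same constant $\max\{\sqrt{\sigma},1/\sqrt{\sigma}\}$ via the comparison lemma. Your explicit closed-loop error dynamics, including the check that $\tfrac{d}{dt}(\tfrac1m\bR_3T)=\bA\bB\bnu$, just spell out a substitution the paper performs directly inside $\dot V$.
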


\begin{proof}
Define the Lyapunov function candidate:
\begin{equation}
     \label{eq:de-lyap}
      V = \tfrac{1}{2}\be_1\T\be_1 + \tfrac{1}{2\sigma}\be_2\T\be_2.
\end{equation}
Differentiating \eqref{eq:de-lyap} along the dynamics \eqref{eq:de_error_dyn} yields:
\begin{equation}
\begin{aligned}
    \dot V = &  \be_1\T\dot{\be}_1 +\tfrac{1}{\sigma}\be_2\T\dot\be_2=\be_1\T(\tfrac{1}{m}\bR_3T + \bg - \dot{\bv}_{\rm{d}}(\bp, t))\\&+\tfrac{1}{\sigma}\be_2\T(\bA(\bxi, t)\bB(\bxi)\bnu - \dot{\ba}_{\rm{d}}(\bp, \bv, t)).
\end{aligned}
\end{equation}
Substituting the control law \eqref{eq:de-tracking-controller} then results in:
\begin{equation}
    \dot V =\be_1\T(-\tfrac{k_1}{2}\be_1+\be_2)+\be_2\T(-\be_1-\tfrac{k_2}{2\sigma}\be_2),
\end{equation}
which simplifies to:
\begin{equation}
\label{eq:dot-de-lyap}
    \dot V = -\tfrac{k_1}{2}\be_1\T\be_1 - \tfrac{k_2}{2\sigma}\be_2\T\be_2.
\end{equation}
Using \eqref{eq:de-lyap}, the above can be bounded as:
\begin{equation}
    \dot V \le -2k(\tfrac{1}{2}\|\be_1\|^2 + \tfrac{1}{2\sigma}\|\be_2\|^2) \le -2kV.
\end{equation}
% where $k \coloneqq\text{min}\{\tfrac{\lambda_1}{2},\tfrac{\lambda_2}{2}\}$. 
Finally, on the interval of existence $I(\bxi_0):=[t_0,\tau_{\max})$, applying the comparison lemma \cite[Lem.~3.4]{Khalil} results in:
\begin{equation}
    V(t) \leq V(t_0) e^{-2k(t - t_0)},
\end{equation}
which can be rearranged using \eqref{eq:de-lyap} to yield \eqref{eq:de-exp-bounds}.
% implies:
% \begin{equation}
%   \|(\be_1(t),\be_2(t))\| \le \text{max}\{\sqrt{\sigma},\tfrac{1}{\sqrt{\sigma}}\}\|(\be_1(0),\be_2(0))\|\,e^{-kt},
% \end{equation}
% for all $t \in I(\bx_0)$. Therefore, along the closed-loop trajectories the errors \eqref{eq:erorrs_de} exponentially converge to zero.
\end{proof}

The above theorem establishes exponential decay of the tracking errors \eqref{eq:erorrs_de} along trajectories of the closed-loop system \eqref{eq:de-dyn}, implying exponential convergence of $\bv$ to $\bv_{\rm{d}}(\bp,t)$ so long as solutions are defined. Theorem \ref{th:de_controller} makes no claims that solutions are complete (exist for all times), which would require further assumptions on $\bv_{\rm{d}}$ and the behavior of $T$ along trajectories. As the emphasis in this paper is on forward invariance, which does not require completeness of solutions \cite{AmesTAC17}, we omit such an analysis here in the interest of space.

\subsection{Geometric Controller}
In the previous subsection, we presented a controller based on dynamic extension for vector field tracking.
Although dynamic extension is commonly used in practice, it can also introduce undesirable delays since one must integrate the designed controller to produce the physical inputs that are sent to the plant. In this section, instead of dynamically extending \eqref{eq:reduced_model_omega}, we use the geometric properties of $S^2$ to directly design a vector field tracking controller for \eqref{eq:reduced_model_omega}.

To this end, let $\bv_{\rm{d}}(\bp,t)$ be a desired smooth vector field
% Consider the system \eqref{eq:reduced_model_omega} 
and define the error coordinates:
\begin{equation}\label{eq:geometric-errors}
    \begin{aligned}
        \be_1 \coloneqq \bv-\bv_\mathrm{d}(\bp,t), \qquad
        \be_2 \coloneqq \bR_3 \times \bR_{3\mathrm{d}},
    \end{aligned}
\end{equation}
where $\bR_{3\mathrm{d}} \coloneqq \ba_\mathrm{d}(\bp,\bv,t)/{\|\bm \ba_\mathrm{d}(\bp,\bv,t)\|}\in S^2$, and $\ba_{\rm{d}}$ is from \eqref{eq:acceleration-profile}.
That is, $\bR_{3\rm{d}}$ is the unit vector associated with the desired acceleration profile.
% \begin{equation}
%     \ba_\mathrm{d}(\bp,\bv,t)\coloneqq \dot{\bm \bv}_{\mathrm{d}}-\bg-\tfrac{k_1}{2}(\bv-\bm \bv_{\mathrm{d}}), \quad k_1 >0.
% \end{equation}
The error dynamics are:
\begin{equation}\label{eq:geometric_error_dyn}
    \begin{bmatrix}
        \dot{\be}_1 \\ \dot{\be}_2
    \end{bmatrix}
    =
    \begin{bmatrix}
        \frac{1}{m}\bR_3T + \bg - \dot{\bv}_{\rm{d}}(\bp, t) \\ 
        \dot{\bR}_3 \times \bR_\mathrm{3d} + {\bR}_3 \times \dot{\bR}_\mathrm{3d}
    \end{bmatrix}.
\end{equation}
Now consider the controllers for \eqref{eq:reduced_model_omega}:
   \begin{equation}
   \label{eq:geometric-tracking-controllers}
   \begin{aligned}
   T &=  m\|\ba_{\rm{d}}(\bp, \bv, t)\|,\\
    % T = & m\vert \vert -\tfrac{k_1}{2}(\bv-\bm \bv_{\mathrm{d}}) -\bg +\dot{\bm \bv}_{\mathrm{d}} \vert \vert,\\
    \bm \omega^\mathrm{i}  &=k_2\be_2 + \bm \omega_\mathrm{d} + \frac{\|\ba_{\rm{d}}\|}{\sigma}\left[ \frac{-[\bR_3 + \bR_{3\rm{d}}]_{\times}}{1 + \bR_3\T\bR_{3\rm{d}}}\right] \be_1,
    \end{aligned}
    \end{equation}
where $\bm \omega_\mathrm{d} \coloneqq \bR_{3\mathrm{d}} \times \dot \bR_{3\mathrm{d}}$, and $k_1, k_2,\sigma>0$. Note that the controllers are well-defined provided that $\|\ba_\mathrm{d}\| >  0$ and  
% and the reduced attitude does not reach the antipodal point, where 
$\bR_3 \neq -\bR_{3\rm{d}}$.
To this end, define the set:
% \begin{equation}\label{eq:ge-state-space}
%     \mathcal{N}_{\rho} \coloneqq \{\bx \in\R^3\times\R^3\times S^2 \mid 1 - \bR_3^\top\bR_{3\mathrm d}< \rho \},
% \end{equation}
\begin{equation}\label{eq:ge-state-space}
    \mathcal{N}_{\rho}(t) \!\coloneqq\! \left\{\bx \in\mathcal{M} \mid \tfrac{1}{2\sigma}\|\be_1(\bp, \bv, t)\|^2 +   1 - \bR_3^\top\bR_{3\mathrm d}< \rho \right\},
\end{equation}
for $\rho\in(0,2)$ and $\sigma>0$,
where $\mathcal{M} \coloneqq \R^3\times\R^3\times S^2$.
The existence of solutions for each initial condition in $\mathcal{N}_{\rho}(t_0)$ follows analogously to that in the previous subsection so long as $\|\ba_{\rm{d}}(\bp_0,\bv_0,t_0)\|>0$. Under this assumption, for each initial condition $\bx_0\in\mathcal{N}_{\rho}(t_0)$ there exists a unique maximal solution $\bx\,:\,I(\bx_0)\rightarrow\mathcal{M}$ to the closed-loop system, defined on a maximal interval of existence $I(\bx_0)=[t_0, \tau_{\max})$. The following lemma will help facilitate further analysis.

\begin{lemma}\label{lemma:a-b}
    Let $\ba,\bb\in S^2$ and suppose that $\ba\neq-\bb$. Then:
    % \begin{equation}
        \begin{align}
            \ba - \bb = & \frac{(\ba + \bb) \times (\ba \times \bb)}{1 + \ba\T\bb}, \label{eq:a-b} \\ 
            (\ba - \bb)\T = & \frac{-(\ba \times \bb)\T [\ba + \bb]_{\times}}{1 + \ba\T\bb}.\label{eq:a-b-T}
        \end{align}
    % \end{equation}
\end{lemma}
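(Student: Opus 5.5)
The plan is to verify \eqref{eq:a-b} by expanding the numerator with the vector triple product identity. Then \eqref{eq:a-b-T} will follow by transposing and using skew-symmetry of $[\cdot]_\times$.

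For \eqref{eq:a-b}, we apply $\bx\times(\by\times\bz) = \by(\bx\T\bz) - \bz(\bx\T\by)$ with $\bx=\ba+\bb$, $\by=\ba$, $\bz=\bb$. This gives
\begin{equation*}
(\ba+\bb)\times(\ba\times\bb) = \ba\,(\ba+\bb)\T\bb - \bb\,(\ba+\bb)\T\ba .
\end{equation*}
Since $\ba,\bb\in S^2$, we have $\ba\T\ba=\bb\T\bb=1$. Hence $(\ba+\bb)\T\bb = 1+\ba\T\bb$ and $(\ba+\bb)\T\ba = 1+\ba\T\bb$. The numerator therefore equals $(1+\ba\T\bb)(\ba-\bb)$.

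The hypothesis $\ba\neq-\bb$ is used to justify the division. By Cauchy--Schwarz, $\ba\T\bb\ge -1$, with equality for unit vectors exactly when $\ba=-\bb$. So $1+\ba\T\bb>0$, and dividing yields \eqref{eq:a-b}.

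For \eqref{eq:a-b-T}, we rewrite the numerator of \eqref{eq:a-b} as $[\ba+\bb]_\times(\ba\times\bb)$ and take the transpose:
\begin{equation*}
(\ba-\bb)\T = \frac{(\ba\times\bb)\T[\ba+\bb]_\times\T}{1+\ba\T\bb}.
\end{equation*}
Because $[\cdot]_\times$ maps into $\mathfrak{so}(3)$, we have $[\ba+\bb]_\times\T = -[\ba+\bb]_\times$, which gives \eqref{eq:a-b-T}.

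There is no real obstacle here. The only point needing care is the strict positivity of $1+\ba\T\bb$, which the equality case of Cauchy--Schwarz handles.
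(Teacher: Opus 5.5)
Your proof is correct and takes essentially the paper's route: expand with the vector triple product and $\|\ba\|=\|\bb\|=1$ to get $(\ba+\bb)\times(\ba\times\bb)=(1+\ba\T\bb)(\ba-\bb)$, then transpose using $[\cdot]_\times\T=-[\cdot]_\times$. The differences are cosmetic. You apply the identity once to $\ba+\bb$ rather than expanding $\ba\times(\bb\times\ba)$ and $\bb\times(\ba\times\bb)$ separately and subtracting, and you explicitly justify $1+\ba\T\bb>0$ via the equality case of Cauchy--Schwarz, which the paper leaves implicit.
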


\begin{proof}
    Consider the vector triple products:
    \begin{equation*}
        \begin{aligned}
            \ba \times (\bb\times \ba) = & \bb(\ba\T\ba) - \ba(\ba\T\bb) = \bb - (\ba\T\bb)\ba \\ 
            \bb\times (\ba \times \bb) = & \ba(\bb\T\bb) - \bb(\ba\T\bb) = \ba - (\ba\T\bb)\bb,
        \end{aligned}
    \end{equation*}
    where the second equality in each line follows from the fact that $\ba,\bb\in S^2$ and hence $\|\ba\|^2=\|\bb\|^2 = 1$. Subtracting the first line from the second line then yields:
    \begin{equation*}
        \begin{aligned}
             \bb\times (\ba \times \bb) - \ba \times (\bb\times \ba) = & \ba - \bb + (\ba\T\bb)(\ba - \bb) \\
             = & (1 + \ba\T\bb)(\ba - \bb).
        \end{aligned}
    \end{equation*}
    After noting that $\ba\times (\bb\times \ba) = - \ba\times (\ba \times \bb)$, we have:
    \begin{equation*}
        \begin{aligned}
            (\ba + \bb) \times (\ba \times \bb) =  (1 + \ba\T\bb)(\ba - \bb),
        \end{aligned}
    \end{equation*}
    which, after dividing through by $1 + \ba\T\bb$ yields \eqref{eq:a-b}. Now, taking the transpose of \eqref{eq:a-b} yields:
    \begin{equation*}
        \begin{aligned}
            (\ba - \bb)\T = & \frac{(\ba \times \bb)\T [\ba + \bb]_{\times}\T}{1 + \ba\T\bb} 
            = \frac{-(\ba \times \bb)\T [\ba + \bb]_{\times}}{1 + \ba\T\bb},
        \end{aligned}
    \end{equation*}
    where the second equality follows from the fact that $[\cdot]_{\times}\T = -[\cdot]_{\times}$, thereby showing \eqref{eq:a-b-T}, as desired.
\end{proof}

The following theorem illustrates that \eqref{eq:geometric-tracking-controllers} enforce exponential decay of the tracking errors \eqref{eq:geometric-errors} along trajectories.

\begin{theorem}\label{th:geometric_controller}
    Let $\bv_{\rm{d}}(\bp,t)$ be a smooth vector field and
    consider system \eqref{eq:reduced_model_omega} 
    with the controller in \eqref{eq:geometric-tracking-controllers}. Provided that $\bx_0=(\bp_0,\bv_0,\bR_{30})\in\mathcal{N}_{\rho}(t_0)$ for $\rho\in(0,2)$, where $\mathcal{N}_{\rho}$ is defined as in \eqref{eq:ge-state-space}, and $\|\ba_{\rm{d}}(\bp_0,\bv_0,t_0)\| > 0$,
    then along closed-loop trajectories, the error coordinates \eqref{eq:geometric-errors} satisfy:
    % with $\bx \in \mathcal{N}_\rho$ as in \eqref{eq:ge-state-space} and let
    % $\bv_{\mathrm d}(\bp,t)$ be smooth. Then, the controllers \eqref{eq:geometric-tracking-controllers} enforce exponential convergence of the error dynamics \eqref{eq:geometric_error_dyn} along trajectories in that for any initial condition $\bx_0 \in \mathcal{N}_\rho$, and $\rho\in(0,2)$:
    % \begin{equation}\label{eq:exp_bound}
    %     \|\be_1(t)\|^2+\|\be_2(t)\|^2
    %     \;\le\;
    %     \frac{2}{2-\rho}\Big(\|\be_1(t_0)\|^2+\|\be_2(t_0)\|^2\Big)\,
    %     e^{-k (t-t_0)},
    % \end{equation}
    \begin{equation}\label{eq:exp_bound}
        \|\be(t)\|\leq
         \sqrt{\tfrac{c_2}{c_1}}\|\be(t_0)\|
        e^{-\frac{k}{2} (t-t_0)},\quad \forall t\in I(\bx_0),
    \end{equation}
    where $\be\coloneqq (\be_1,\be_2)$,
    $
    k \coloneqq \min\{k_1,(2-\rho)k_2\} \;>\;0
    $, $c_1 := \min\left\{\tfrac{1}{2\sigma}, \tfrac{1}{2}\right\}$, and
$c_2 := \max\left\{\tfrac{1}{2\sigma}, \tfrac{1}{2-\rho}\right\}$.
\end{theorem}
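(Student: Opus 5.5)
We will use the Lyapunov function
\[
V = \tfrac{1}{2\sigma}\|\be_1\|^2 + 1 - \bR_3\T\bR_{3\rm d},
\]
which is exactly the quantity defining $\mathcal{N}_\rho(t)$. The plan is to show that $\dot V\le -kV$ along closed-loop trajectories, so that $\mathcal{N}_\rho(t)$ is forward invariant. The stated bound on $\|\be\|$ then follows from sandwich bounds relating $V$ to $\|\be\|^2$.

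\textbf{Translational term.} With $T=m\|\ba_{\rm d}\|$ we have $\tfrac1m\bR_3T=\|\ba_{\rm d}\|\bR_3$. Hence
\[
\dot\be_1=\|\ba_{\rm d}\|(\bR_3-\bR_{3\rm d})-\tfrac{k_1}{2}\be_1,
\]
using $\|\ba_{\rm d}\|\bR_{3\rm d}=\ba_{\rm d}$ and \eqref{eq:acceleration-profile}. Therefore
\[
\tfrac1\sigma\be_1\T\dot\be_1=-\tfrac{k_1}{2\sigma}\|\be_1\|^2+\tfrac{\|\ba_{\rm d}\|}{\sigma}\be_1\T(\bR_3-\bR_{3\rm d}).
\]
By Lemma~\ref{lemma:a-b} with $\ba=\bR_3$, $\bb=\bR_{3\rm d}$, the cross term equals
\[
\tfrac{\|\ba_{\rm d}\|}{\sigma}\,\be_1\T\frac{[\bR_3+\bR_{3\rm d}]_\times}{1+\bR_3\T\bR_{3\rm d}}\,\be_2,
\]
which is expressed through $\be_2=\bR_3\times\bR_{3\rm d}$.

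\textbf{Attitude term.} Write $\dot\bR_3=\bomega^{\rm i}\times\bR_3$. This holds because $\dot\bR_3=\bR[\bomega]_\times\bb_3=\bomega^{\rm i}\times\bR_3$; the $\omega_3$ component is absent, but it is parallel to $\bR_3$ in the inertial frame, so the choice \eqref{eq:geometric-tracking-controllers} is understood up to that component, which does not affect $\bR_3$. Also $\dot\bR_{3\rm d}=\bomega_{\rm d}\times\bR_{3\rm d}$, since $\bR_{3\rm d}$ is a unit vector. Then
\[
\tfrac{d}{dt}(-\bR_3\T\bR_{3\rm d})=-(\bomega^{\rm i}\times\bR_3)\T\bR_{3\rm d}-\bR_3\T(\bomega_{\rm d}\times\bR_{3\rm d})=-(\bomega^{\rm i}-\bomega_{\rm d})\T\be_2
\]
by the scalar triple product.

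\textbf{Cancellation.} Substituting $\bomega^{\rm i}$ gives
\[
-(\bomega^{\rm i}-\bomega_{\rm d})\T\be_2=-k_2\|\be_2\|^2+\tfrac{\|\ba_{\rm d}\|}{\sigma}\be_1\T\frac{[\bR_3+\bR_{3\rm d}]_\times\T}{1+\bR_3\T\bR_{3\rm d}}\be_2,
\]
up to signs to be checked. Since $[\cdot]_\times\T=-[\cdot]_\times$, this cancels the cross term from the translational part; this is the reason for the form \eqref{eq:a-b-T}. Sign bookkeeping here is the main place an error could occur, and I would verify it carefully. We obtain
\[
\dot V=-\tfrac{k_1}{2\sigma}\|\be_1\|^2-k_2\|\be_2\|^2 .
\]

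\textbf{Comparison with $V$.} Let $\Psi=1-\bR_3\T\bR_{3\rm d}$. Then $\|\be_2\|^2=1-(\bR_3\T\bR_{3\rm d})^2=\Psi(2-\Psi)$. While $\Psi\le V<\rho$, this gives $\|\be_2\|^2\ge(2-\rho)\Psi$ and $\|\be_2\|^2\ge\Psi\ge\tfrac12\|\be_2\|^2$. Hence
\[
\dot V\le -k_1\tfrac{1}{2\sigma}\|\be_1\|^2-(2-\rho)k_2\Psi\le -kV .
\]
This argument needs $\Psi<\rho$ along the solution, which is the main subtlety. I would handle it with a standard invariance argument. Since $V(t_0)<\rho$ and $\dot V\le0$ whenever $V<\rho$, continuity shows $V$ cannot reach $\rho$ on $I(\bx_0)$. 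This keeps $\bR_3\neq-\bR_{3\rm d}$, since $\rho<2$. We also need $\|\ba_{\rm d}\|>0$ along the solution. I would argue that this holds on the maximal interval, where the closed loop is well defined by construction: leaving the domain only truncates $I(\bx_0)$, and the estimate is claimed only on $I(\bx_0)$. The comparison lemma \cite[Lem.~3.4]{Khalil} then gives $V(t)\le V(t_0)e^{-k(t-t_0)}$.

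\textbf{Sandwich bounds.} Use $\tfrac12\|\be_2\|^2\le\Psi\le\tfrac{1}{2-\rho}\|\be_2\|^2$. This yields $c_1\|\be\|^2\le V\le c_2\|\be\|^2$. Taking square roots produces \eqref{eq:exp_bound} with rate $k/2$.
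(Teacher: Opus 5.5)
Your proposal is correct and matches the paper's proof essentially step for step: the same Lyapunov function, Lemma~\ref{lemma:a-b} to express $\bR_3-\bR_{3\mathrm{d}}$ through $\be_2$, and cancellation of the cross term by the $\bomega^{\mathrm{i}}$ design (your signs do check out). It then uses the same identity $\|\be_2\|^2=\Psi(2-\Psi)$ to get $\dot V\le -kV$ on $\mathcal{N}_\rho$, invariance via $\dot V\le 0$, and the same sandwich bounds; the one harmless slip is the aside $\|\be_2\|^2\ge\Psi$, which fails when $\Psi>1$ (possible for $\rho>1$) but is never used, since only $\|\be_2\|^2\ge(2-\rho)\Psi$ and $\Psi\ge\tfrac12\|\be_2\|^2$ are needed.
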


\begin{proof}
Consider the Lyapunov function candidate:
\begin{equation}
\label{eq:geometric_lyap}
    V(\bx, t) = \underbrace{\tfrac{1}{2\sigma}\|\be_1(\bp, \bv, t)\|^2}_{V_1} + \underbrace{1 - \bR_3\T\bR_{3\rm{d}}}_{V_2}.
\end{equation}
Computing the time derivative of $V_1$ yields:
\begin{equation}
\begin{aligned}
    % V_1 &=  \frac{1}{2}(\bv-\bm \bv_{\mathrm{d}})^T (\bv-\bm \bv_{\mathrm{d}}),\\
    \dot V_1 = &  \tfrac{1}{\sigma}\be_1\T(\tfrac{1}{m}\bR_3T + \bg - \dot {\bm \bv}_{\mathrm{d}}) \\
    = & \tfrac{1}{\sigma}\be_1\T(\tfrac{1}{m}\bR_3T + \bg - \dot {\bm \bv}_{\mathrm{d}} \pm \ba_{\rm{d}}) \\
    = & -\tfrac{k_1}{2\sigma}\|\be_1\|^2 + \tfrac{1}{\sigma}\be_1\T(\tfrac{1}{m}\bR_3T - \ba_{\rm{d}}), \\
\end{aligned}
\end{equation}
where the second equality follows from adding a zero and the third from \eqref{eq:acceleration-profile}. Now, substituting in $T$ from \eqref{eq:geometric-tracking-controllers} yields:
\begin{equation}
    \begin{aligned}
    \label{eq:dot-v1-ge}
        \dot V_1 = & -\tfrac{k_1}{2\sigma}\|\be_1\|^2 + \tfrac{1}{\sigma}\be_1\T(\bR_3\|\ba_{\rm{d}}\| - \ba_{\rm{d}}) \\
        = &-\tfrac{k_1}{2\sigma}\|\be_1\|^2 + \tfrac{1}{\sigma}\be_1\T(\bR_3\|\ba_{\rm{d}}\| - \bR_{3\rm{d}}\|\ba_{\rm{d}}\|) \\
        = & -\tfrac{k_1}{2\sigma}\|\be_1\|^2 + \tfrac{1}{\sigma}\|\ba_{\rm{d}}\|\be_1\T(\bR_3 - \bR_{3\rm{d}}) \\
        = & -\tfrac{k_1}{2\sigma}\|\be_1\|^2 + (\bR_3 - \bR_{3\rm{d}})\T \be_1 \tfrac{\|\ba_{\rm{d}}\|}{\sigma},
    \end{aligned}
\end{equation}
where the second equality follows from the definition of $\bR_{3\rm{d}}$. Next, computing the time derivative of $V_2$ yields:
\begin{equation}
    \begin{aligned}
        \dot{V}_2 = &  -\dot{\bR}_3\T\bR_{3\rm{d}} - \bR_3\T\dot{\bR}_{3\rm{d}} \\
        = & -(\bomega^\mathrm{i} \times \bR_3)\T\bR_{3\rm{d}} - \bR_3\T(\bomega_{\rm{d}} \times\bR_{3\rm{d}}) \\
        = & -\bomega^{\mathrm{i}\top} (\bR_3 \times \bR_{3\rm{d}}) - \bomega_{\rm{d}}\T(\bR_{3\rm{d}} \times \bR_3) \\
        = & -\bomega^{\mathrm{i}\top}(\bR_3 \times \bR_{3\rm{d}}) + \bomega_{\rm{d}}\T(\bR_3 \times \bR_{3\rm{d}}) \\
        = & \be_2\T(\bomega_{\rm{d}} - \bomega^\mathrm{i} ).
    \end{aligned}
\end{equation}
where we used the fact that $\dot{\bR}_3 = \bomega^\mathrm{i}  \times \bR_{3}$. Therefore:
\begin{equation*}
\label{eq:dot-v-ge-cross}
    \begin{aligned}
        \dot{V} = & -\tfrac{k_1}{2\sigma}\|\be_1\|^2 + (\bR_3 - \bR_{3\rm{d}})\T \be_1 \tfrac{\|\ba_{\rm{d}}\|}{\sigma} + \be_2\T(\bomega_{\rm{d}} - \bomega_\mathrm{i} ).
    \end{aligned}
\end{equation*}
It follows from Lemma \ref{lemma:a-b} that:
\begin{equation*}
    \begin{aligned}
        (\bR_3 - \bR_{3\rm{d}})\T = & \tfrac{-(\bR_3 \times \bR_{3\rm{d}})\T[\bR_3 + \bR_{3\rm{d}}]_{\times}}{1 + \bR_3\T\bR_{3\rm{d}}} \\
        = & \be_2\T\left[ \tfrac{-[\bR_3 + \bR_{3\rm{d}}]_{\times}}{1 + \bR_3\T\bR_{3\rm{d}}}\right],
    \end{aligned}
\end{equation*}
which, substituting back into the derivative of $V$, gives:
\begin{equation*}
    \begin{aligned}
        \dot{V} = & -\tfrac{k_1}{2\sigma}\|\be_1\|^2 + \tfrac{\|\ba_{\rm{d}}\|}{\sigma}\be_2\T\left[ \tfrac{-[\bR_3 + \bR_{3\rm{d}}]_{\times}}{1 + \bR_3\T\bR_{3\rm{d}}}\right] \be_1  \\&+ \be_2\T(\bomega_{\rm{d}} - \bomega^\mathrm{i} ) \\
        = & -\tfrac{k_1}{2\sigma}\|\be_1\|^2 \\ & + \be_2\T\left(\bomega_{\rm{d}} - \bomega^\mathrm{i}  +  \tfrac{\|\ba_{\rm{d}}\|}{\sigma} \left[ \tfrac{-[\bR_3 + \bR_{3\rm{d}}]_{\times}}{1 + \bR_3\T\bR_{3\rm{d}}}\right] \be_1\right).
    \end{aligned}
\end{equation*}
Designing $\bomega^\mathrm{i} $ as in \eqref{eq:geometric-tracking-controllers} thus results in:
\begin{equation}
\label{eq:vdot_exact}
    \dot V =  -\tfrac{k_1}{2\sigma}\|\be_1\|^2 -k_2\|\be_2\|^2.
\end{equation}
Now, note that $\|\be_2\|^2 = \|\bR_{3} \times \bR_{3\rm{d}}\|^2 = \sin^2\theta = 1 - \cos^2\theta,$
% \begin{equation*}
%     \|\be_2\|^2 = \|\bR_{3} \times \bR_{3\rm{d}}\|^2 = \sin^2\theta = 1 - \cos^2\theta,
% \end{equation*}
where $\theta\in S^1$ is the angle between $\bR_{3}$ and $\bR_{3\rm{d}}$. Hence:
\begin{equation*}
    \begin{aligned}
        \|\be_2\|^2 = 1 - \cos^2\theta = & (1 - \cos\theta)(1 + \cos\theta) \\
        = & (1 - \bR_3\T\bR_{3\rm{d}})(1 + \bR_3\T\bR_{3\rm{d}}) \\
        = & V_2(2 - V_2),
    \end{aligned}
\end{equation*}
allowing $\dot{V}$ to be expressed as:
\begin{equation}
    \dot{V} = -k_1V_1 - k_2V_2(2 - V_2).
\end{equation}
Now, if $\bx\in\mathcal{N}_{\rho}(t)$, then
$V(\bx, t) < \rho$, which implies that $V_2 < \rho$. It then follows that
% $V_2 < \rho$, which implies that 
$2 - V_2 > 2- \rho$, and hence that:
\begin{equation}
\label{eq:dot-v-geometric}
    \begin{aligned}
        \dot{V}(\bx, t) \leq -k_1V_1 - (2 - \rho)k_2 V_2 \leq -kV(\bx, t),
    \end{aligned}
\end{equation}
for all $\bx\in\mathcal{N}_{\rho}(t)$,
where $k\coloneqq \min\{k_1,(2 - \rho)k_2\}$.
Now, let $\bx_0\in\mathcal{N}_{\rho}(t_0)$ and suppose that $\|\ba_{\rm{d}}(\bp_0,\bv_0,t_0)\|>0$. Since $\rho<2$, it follows from \eqref{eq:ge-state-space} that the closed-loop dynamics \eqref{eq:reduced_model_omega} with controller \eqref{eq:geometric-tracking-controllers} are well-defined and locally Lipschitz in an open neighborhood of $\bx_0$. Hence, there exists a unique maximal solution $\bx\,:\,I(\bx_0)\rightarrow\mathcal{M}$ to the closed-loop system defined on a maximal interval of existence $I(\bx_0)$. Along this solution, it follows from \eqref{eq:dot-v-geometric} and the comparison lemma \cite[Lem.~3.4]{Khalil} and \eqref{eq:dot-v-geometric} that $V(\bx(t), t) \leq e^{-k(t - t_0)}V(\bx_0,t_0)\leq V(\bx_0,t_0)$ for all $t\in I(\bx_0)$. Since $\bx_0\in\mathcal{N}_{\rho}(t_0)$, we have $V(\bx_0,t_0) < \rho$ and that $V_2(t) < V(\bx(t), t) < \rho$ for all $t\in I(\bx_0)$, which implies that $\bx(t)\in\mathcal{N}_{\rho}(t)$ and $\bR_3(t)\neq -\bR_{3\rm{d}}(\bp(t),\bv(t),t)$ for all $t\in I(\bx_0)$.
% Since $\rho<2$
% strictly, suppose a trajectory starts in $\mathcal{N}_\rho$, i.e.\ $V_2(0)<\rho$, and $\|\ba_\mathrm{d}(\bp_0,\bv_0,t_0)\|>0$.
% As long as the trajectory remains in $\mathcal{N}_\rho$, the vector field
% is well-defined there and \eqref{eq:vdot_exact} gives $\dot V\le0$;
% combined with $V_2\le V$ (from $\tfrac12\|\be_1\|^2\ge0$), this yields
% $V_2\le V(t)\le V(t_0)<\rho$, so the trajectory cannot leave
% $\mathcal{N}_\rho$ except possibly by approaching its boundary
% $V_2=\rho$ from within — which the strict inequality $V_2<\rho$
% rules out. Hence $\mathcal{N}_\rho$ is forward invariant, the trajectory
% never approaches the singularity, and the upper bound in
% \eqref{eq:dot-v-geometric} holds along the entire trajectory. Finally, on the interval of existence $I(\bx_0):=[t_0,\tau_{\max})$, applying the comparison lemma \cite[Lem.~3.4]{Khalil} results in:
% \begin{equation}
% V(t) \le V(t_0)\,e^{-k (t - t_0)}.
% \end{equation}
Since $\|\be_2\|^2 = V_2(2-V_2)$ with $V_2\in[0,2]$, we have $\|\be_2\|^2 \le 2V_2$, while on $\mathcal{N}_\rho$ the bound $2-V_2 > 2-\rho$ gives $\|\be_2\|^2 \ge (2-\rho)V_2$. Hence $\tfrac{1}{2}\|\be_2\|^2 \le V_2 \le \tfrac{1}{2-\rho}\|\be_2\|^2$, and since $\tfrac{1}{2} \le \tfrac{1}{2-\rho}$ for $\rho\in(0,2)$:
\begin{equation}
\label{eq:sandwich}
    \tfrac{1}{2}\left(\tfrac{1}{\sigma}\|\be_1\|^2 + \|\be_2\|^2\right) \le V \le \left(\tfrac{1}{2\sigma}\|\be_1\|^2 + \tfrac{1}{2-\rho}\|\be_2\|^2\right).
\end{equation}
Combining \eqref{eq:sandwich}
with the bounds on $V$ obtained via the comparison lemma
% with the lower bound at time $t$ and the upper bound at time $t_0$, 
yields \eqref{eq:exp_bound}, as desired.
% Applying the bounds \eqref{eq:sandwich} once more (lower bound at
% time $t$, upper bound at time $0$ gives \eqref{eq:exp_bound}. Since $\rho\in(0,2)$
% was arbitrary and $\bigcup_\rho \mathcal{N}_\rho$ covers $S^2$ except the
% antipodal point $\bR_3^\top\bR_{3\mathrm d}=-1$, the result holds almost
% globally.
\end{proof}

Similar remarks regarding completeness of solutions made after Theorem \ref{th:de_controller} also apply in the context of Theorem \ref{th:geometric_controller}.

\section{Safety-Critical Control}
In the previous section, we designed controllers enforcing exponential convergence of $\bv$ to $\bv_{\rm{d}}(\bp,t)$. In this section, we illustrate how
these previous results may be used to
enforce safety constraints on a quadrotor's position. 
Here, we fix the \emph{structure} of the controllers developed in the previous section, and showcase how safety can be enforced through the design of the desired vector field. To this end, we consider:
% Our objective is to  encode safety objectives into the desired vector field for the dynamically extended system \eqref{eq:de-dyn}, and \eqref{eq:reduced_model_omega},  ensuring that the quadrotor's center of mass position remains within a prescribed safe set: 
% This candidate safe set is defined as:
\begin{equation}\label{eq:C-quadrotor}
    \mathcal{C} = \{\bp\in\R^3 \mid h(\bp) \ge 0\},
\end{equation}
where $h:\R^3\rightarrow\R$ is smooth, as the constraint set. Our objective is to ensure that $h(\bp(t))\geq0$ along closed-loop trajectories generated by the dynamic extension controller \eqref{eq:de-tracking-controller} or geometric controller \eqref{eq:geometric-tracking-controllers} when tracking $\bv_{\rm{d}}$.
To do so, we design $\bv_{\mathrm{d}}$ to satisfy the robust CBF condition:
% \MC{Replace ISSf term with bounded gradient assumption.}
\begin{equation}\label{eq:safe-desired-vector field}
    \begin{aligned}
        \tpdv{h}{\bp}(\bp) \bv_{\mathrm{d}}(\bp, t) \geq - \alpha h(\bp)+ \tfrac{1}{\epsilon}\|\tfrac{\partial h}{\partial \bp}(\bp)\|^2,
    \end{aligned}
\end{equation}
for each 
$\bp\in\R^3$,
% $(\bq,\bhat{\theta})\in\R^{n}\times\R^p$, 
where $\alpha>0$ represents a linear extended class $\mathcal{K}_{\infty}$ function, $s\mapsto \alpha s$, as in \eqref{eq:cbf} and $\epsilon>0$ is a robustness parameter, inspired by the input-to-state-safety framework \cite{AmesLCSS19}. Note that a smooth $\bv_{\rm{d}}$ satisfying \eqref{eq:safe-desired-vector field} can be designed using the tools from \cite{CohenLCSS23}. The intuition behind \eqref{eq:safe-desired-vector field} is that \emph{if} $\bv = \bv_{\rm{d}}(\bp,t)$, then \eqref{eq:safe-desired-vector field} implies that $\dot{h} = \pdv{h}{\bp}\bv\geq - \alpha(h)$, which implies that $h$ satisfies the CBF condition \eqref{eq:cbf}. We will thus use the controllers from the previous section to drive $\bv$ to $\bv_{\rm{d}}$ while ensuring safety even during transients.

\subsection{Safety with Dynamic Extension Controller}
We begin by establishing safety properties for the dynamic extension vector field tracking controller from \eqref{eq:de-tracking-controller}. To this end, consider the CBF candidate:
\begin{equation}\label{eq:barrier-de}
    \begin{aligned}
        B(\bxi,t) \coloneqq h(\bp) - \tfrac{1}{\mu}V(\bxi, t),\quad \mu > 0,
    \end{aligned}
\end{equation}
for the dynamically extended system \eqref{eq:de-dyn},
where $V$ is the Lyapunov function from \eqref{eq:de-lyap}, which induces a safe set:
\begin{equation}\label{eq:de-safe-set}
    \mathcal{S}(t) \coloneqq \{\bxi\in\mathcal{M}\mid B(\bxi, t) \geq0\}.
\end{equation}
The following theorem provides conditions under which tracking $\bv_{\rm{d}}$ in \eqref{eq:safe-desired-vector field} using the dynamic extension controller \eqref{eq:de-tracking-controller} implies that $\bxi(t)\in \mathcal{S}(t)$ for all $t\in I(\bxi_0)$.

\begin{theorem}\label{th:safety_robot}
      Let $\bv_\mathrm{d}\,:\,\R^3\times\R\rightarrow\R^3$ be any smooth desired vector field satisfying \eqref{eq:safe-desired-vector field} and let the hypotheses of Theorem \ref{th:de_controller} hold. Provided that:
    \begin{equation}\label{eq:safe-tracking-conditions}
          k_1 \geq \max\left\{\epsilon\mu,~2\alpha\right\}, \quad k_2 \ge \alpha,
    \end{equation}
    then $\mathcal{S}$ as in \eqref{eq:de-safe-set} is forward invariant for the closed-loop system \eqref{eq:de-dyn}, and $\bp(t)\in\mathcal{C}$ for all $t\in I(\bxi_0)$.
\end{theorem}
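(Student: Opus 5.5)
The plan is to show that $B$ in \eqref{eq:barrier-de} satisfies $\dot B \geq -\alpha B$ along closed-loop trajectories. A comparison lemma argument then gives forward invariance of $\mathcal{S}$. Since $V\geq 0$, $B\geq 0$ implies $h(\bp)\geq 0$, so $\bp(t)\in\mathcal{C}$ follows immediately from invariance.

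First I compute $\dot h = \tpdv{h}{\bp}\bv$ and write $\bv = \bv_{\rm d} + \be_1$, so that
\begin{equation*}
\dot h = \tpdv{h}{\bp}\bv_{\rm d} + \tpdv{h}{\bp}\be_1 .
\end{equation*}
Applying \eqref{eq:safe-desired-vector field} to the first term gives
\begin{equation*}
\dot h \geq -\alpha h + \tfrac{1}{\epsilon}\|\tpdv{h}{\bp}\|^2 - \|\tpdv{h}{\bp}\|\|\be_1\|.
\end{equation*}
By Young's inequality, $\|\tpdv{h}{\bp}\|\|\be_1\| \leq \tfrac{1}{\epsilon}\|\tpdv{h}{\bp}\|^2 + \tfrac{\epsilon}{4}\|\be_1\|^2$. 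Hence
\begin{equation*}
\dot h \geq -\alpha h - \tfrac{\epsilon}{4}\|\be_1\|^2.
\end{equation*}
This is the step where the robustness term $\tfrac{1}{\epsilon}\|\tpdv{h}{\bp}\|^2$ is used, and it is the only nontrivial estimate.

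Next I use the exact expression \eqref{eq:dot-de-lyap} from the proof of Theorem \ref{th:de_controller}. It gives
\begin{equation*}
-\tfrac{1}{\mu}\dot V = \tfrac{k_1}{2\mu}\|\be_1\|^2 + \tfrac{k_2}{2\sigma\mu}\|\be_2\|^2 .
\end{equation*}
Combining the two estimates,
\begin{equation*}
\dot B \geq -\alpha h + \left(\tfrac{k_1}{2\mu} - \tfrac{\epsilon}{4}\right)\|\be_1\|^2 + \tfrac{k_2}{2\sigma\mu}\|\be_2\|^2 .
\end{equation*}
I then compare this with $-\alpha B = -\alpha h + \tfrac{\alpha}{2\mu}\|\be_1\|^2 + \tfrac{\alpha}{2\sigma\mu}\|\be_2\|^2$. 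The inequality $\dot B \geq -\alpha B$ holds provided two coefficient conditions are met:
\begin{equation*}
\tfrac{k_1}{2\mu} - \tfrac{\epsilon}{4} \geq \tfrac{\alpha}{2\mu}, \qquad k_2 \geq \alpha .
\end{equation*}
The second is exactly the hypothesis in \eqref{eq:safe-tracking-conditions}.

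The main obstacle is matching the first condition to \eqref{eq:safe-tracking-conditions}, since the constants in the Young split must be tuned. The first condition is equivalent to $k_1 \geq \alpha + \tfrac{\epsilon\mu}{2}$. Because $k_1 \geq \max\{\epsilon\mu, 2\alpha\}$, we have $\tfrac{k_1}{2}\geq \tfrac{\epsilon\mu}{2}$ and $\tfrac{k_1}{2} \geq \alpha$. Adding these gives $k_1 \geq \alpha + \tfrac{\epsilon\mu}{2}$, so the stated conditions suffice. If the paper instead intends a different Young split, for example $\tfrac{\epsilon}{2}$ in place of $\tfrac{\epsilon}{4}$, I would rebalance the constants the same way, with one half of $k_1$ absorbing the $\epsilon$ term and the other half absorbing $\alpha$.

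Finally, along the maximal solution on $I(\bxi_0)$, the inequality $\dot B \geq -\alpha B$ and the comparison lemma \cite[Lem.~3.4]{Khalil} give $B(\bxi(t),t) \geq B(\bxi_0,t_0)e^{-\alpha(t-t_0)} \geq 0$ whenever $\bxi_0\in\mathcal{S}(t_0)$. This establishes forward invariance of $\mathcal{S}$. Together with $h(\bp) \geq \tfrac{1}{\mu}V \geq 0$, it gives $\bp(t)\in\mathcal{C}$ for all $t\in I(\bxi_0)$.
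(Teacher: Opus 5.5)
Your proposal is correct and follows the paper's proof essentially step for step. You decompose $\bv=\bv_{\rm d}+\be_1$, invoke \eqref{eq:safe-desired-vector field}, absorb the cross term $\|\tpdv{h}{\bp}\|\|\be_1\|$ with the $\tfrac{1}{\epsilon}\|\tpdv{h}{\bp}\|^2$ term, use the exact $\dot V$ from \eqref{eq:dot-de-lyap}, compare with $-\alpha B$, and finish with the comparison lemma. The only difference is cosmetic: the paper does the absorption through positive semidefiniteness of a $2\times 2$ matrix that reserves half of $\tfrac{k_1}{2\mu}\|\be_1\|^2$, which gives the separate conditions $k_1\ge\epsilon\mu$ and $k_1\ge 2\alpha$, whereas your Young split gives the single, slightly less restrictive condition $k_1\ge\alpha+\tfrac{\epsilon\mu}{2}$, which you correctly show is implied by \eqref{eq:safe-tracking-conditions}.
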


\begin{proof}
Taking time derivative of $B$ yields:
\begin{equation*}
        \begin{aligned}
            \dot{B}(\bxi,t) = & \tpdv{h}{\bp}(\bp)\bv - \tfrac{1}{\mu}\dot{V}(\bxi,t) \\ 
            = & \tpdv{h}{\bp}(\bp)\bv_\mathrm{d} + \tpdv{h}{\bp}(\bp)\be_1- \tfrac{1}{\mu}\dot{V}(\bxi,t) ,
        \end{aligned}
\end{equation*}
where the second equality follows from $\bv=\bv_{\rm{d}} + \be_1$, where $\be_1$ is defined as in \eqref{eq:erorrs_de}.
Lower bounding the above yields:
\begin{equation*}
    \begin{aligned}
        \dot{B} \geq & -\alpha h  + \tfrac{1}{\epsilon}\|\tpdv{h}{\bp}\|^2 - \|\tpdv{h}{\bp}\|\|\be_1\| +\tfrac{k_1}{2\mu}\|\be_1\|^2
+\tfrac{k_2}{2\sigma\mu}\|\be_2\|^2,
    \end{aligned}
\end{equation*}
where we used \eqref{eq:safe-desired-vector field} and \eqref{eq:dot-de-lyap}. Now, note that:
\begin{equation*}
    \begin{aligned}
        \tfrac{1}{\epsilon}\|\tpdv{h}{\bp}\|^2 - \|\tpdv{h}{\bp}\|\|\be_1\| + \tfrac{k_1}{4\mu}\|\be_1\|^2 \\ 
        = 
        \begin{bmatrix}
          \|\pdv{h}{\bp}\| & \|\be_1\| 
        \end{bmatrix} 
        \begin{bmatrix}
            \frac{1}{\epsilon} & -\frac{1}{2}\\ -\frac{1}{2} & \frac{k_1}{4\mu}
        \end{bmatrix}
        \begin{bmatrix}
           \|\pdv{h}{\bp}\| \\ \|\be_1\| 
        \end{bmatrix}.
    \end{aligned}
\end{equation*}
If $k_1\ge {\epsilon\mu}$, the above matrix is positive semi-definite, allowing $\dot{B}$ to be further lower bounded as:
\begin{equation*}
    \begin{aligned}
        \dot{B} \ge &  -\alpha h + \tfrac{k_1}{4\mu}\|\be_1\|^2 +\tfrac{k_2}{2\sigma\mu}\|\be_2\|^2 \\
        = & -\alpha[B + \tfrac{1}{\mu}V] + \tfrac{k_1}{4\mu}\|\be_1\|^2 +\tfrac{k_2}{2\sigma\mu}\|\be_2\|^2 \\
        = & -\alpha B + \tfrac{1}{2\mu}[\tfrac{k_1}{2} - \alpha]\|\be_1\|^2 + \tfrac{1}{2\sigma\mu}[k_2 - \alpha]\|\be_2\|^2,
    \end{aligned}
\end{equation*} 
where the second line follows from \eqref{eq:barrier-de} and the third from \eqref{eq:de-lyap}. If
$k_1\geq 2\alpha$ and $k_2 \ge \alpha$, we may bound $\dot{B}$ as $\dot{B}(\bxi,t) \ge -\alpha B(\bxi, t)$ for any $\bxi\in\mathcal{M}$ and any $t\in\R$. Now, for any $\bxi_0\in\mathcal{S}$ there exists a unique maximal solution $\bxi\,:\, I(\bxi_0)\rightarrow\mathcal{M}$ defined on a maximal interval of existence $I(\bxi_0)$. It then follows from the comparison lemma \cite[Lem.~3.4]{Khalil} that: 
\begin{equation*}
    B(\bxi(t), t) \geq e^{-\alpha(t - t_0)}B(\bxi_0, t_0) \geq 0,\quad \forall t\in I(\bxi_0),
\end{equation*}
where the second inequality follows from the fact that $B(\bxi_0,t_0)\geq0$. Hence, $\mathcal{S}$ is forward invariant. Further, it follows from \eqref{eq:barrier-de} that for all $t\in I(\bxi_0)$:
\begin{equation*}
     B(\bxi(t), t) \geq 0 \implies h(\bp(t)) \geq \tfrac{1}{\mu}V(\bxi(t),t)\geq 0,
\end{equation*}
implying that $\bp(t)\in\mathcal{C}$ for all $t\in I(\bxi_0)$, as desired.
% \begin{equation}
%     \dot{B}(\bxi,t) \ge -\alpha B(\bxi, t),\quad\forall (\bxi,t)\in\mathcal{M}\times\R,
% \end{equation}
% which implies that ${B}(\bz,t)$ is a barrier function for \eqref{eq:de-dyn}.
% Invoking \cite[Theorem 1]{mirtaba2026high} then implies that the set $\mathcal{S}$ from \eqref{eq:S-adaptive} is forward invariant for the dynamics in \eqref{eq:de-dyn}.
\end{proof}

\subsection{Safety with Geometric Controller}
We now shift to establishing safety of \eqref{eq:reduced_model_omega} when using the geometric controller \eqref{eq:geometric-tracking-controllers} to track a smooth vector field satisfying \eqref{eq:safe-desired-vector field}. Here, we consider the CBF candidate:
\begin{equation}\label{eq:barrier-geometric}
    \begin{aligned}
        B(\bx,t) \coloneqq h(\bp) - \tfrac{1}{\mu}V(\bx, t),\quad \mu > 0,
    \end{aligned}
\end{equation}
where $h$ is from \eqref{eq:C-quadrotor} and 
where $V$ is the Lyapunov function from \eqref{eq:geometric_lyap}. This CBF candidate induces a safe set:
\begin{equation}\label{eq:ge-safe-set}
    \mathcal{S}(t) \coloneqq \{\bx\in\mathcal{N}_\rho\mid B(\bx, t) \geq0\}.
\end{equation}
The following theorem provides conditions under which tracking $\bv_{\rm{d}}$ in \eqref{eq:safe-desired-vector field} using the geometric controllers \eqref{eq:geometric-tracking-controllers} implies that $\bx(t)\in \mathcal{S}(t)$ for all $t\in I(\bx_0)$. 

\begin{theorem}\label{th:safety_geometric_robot}
      Let $\bv_\mathrm{d}(\bp,t)$ be a smooth desired vector field satisfying \eqref{eq:safe-desired-vector field} and let the hypotheses of Theorem 2 hold for the dynamics \eqref{eq:reduced_model_omega} and controller \eqref{eq:geometric-tracking-controllers}. Provided that:
    \begin{equation}\label{eq:safe-tracking-conditions_geometric}
          k_1 \geq \max\left\{\epsilon\sigma\mu,~2\alpha\right\}, \quad k_2 \ge \tfrac{\alpha}{2-\rho},
    \end{equation}
    $\bx_0 \in \mathcal{S}$, for $\mathcal{S}$ as in \eqref{eq:ge-safe-set}, and $\bx_0$ satisfies the conditions of Theorem 2, then along trajectories of the closed-loop system, we have $\bp(t)\in\mathcal{C}$ for all $t\in I(\bx_0)$.
\end{theorem}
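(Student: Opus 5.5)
The argument follows the proof of Theorem~\ref{th:safety_robot}, with the geometric Lyapunov function \eqref{eq:geometric_lyap} in place of \eqref{eq:de-lyap}. The only extra ingredient is that the decay bound for $V$ holds only on $\mathcal{N}_\rho(t)$. By hypothesis $\bx_0\in\mathcal{N}_\rho(t_0)$ and $\|\ba_{\rm{d}}(\bp_0,\bv_0,t_0)\|>0$. The proof of Theorem~\ref{th:geometric_controller} then already shows that $\bx(t)\in\mathcal{N}_\rho(t)$ for all $t\in I(\bx_0)$, and that along the solution $\dot V=-\tfrac{k_1}{2\sigma}\|\be_1\|^2-k_2\|\be_2\|^2$ with $\|\be_2\|^2=V_2(2-V_2)\ge(2-\rho)V_2$. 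I take these facts as given and only need to lower bound $\dot B$ along the solution.

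First compute $\dot B=\tpdv{h}{\bp}\bv-\tfrac{1}{\mu}\dot V$ and write $\bv=\bv_{\rm{d}}+\be_1$. Applying \eqref{eq:safe-desired-vector field} and Cauchy--Schwarz gives
\begin{equation*}
\dot B\ge-\alpha h+\tfrac{1}{\epsilon}\|\tpdv{h}{\bp}\|^2-\|\tpdv{h}{\bp}\|\|\be_1\|+\tfrac{k_1}{2\sigma\mu}\|\be_1\|^2+\tfrac{k_2}{\mu}\|\be_2\|^2 .
\end{equation*}
Split the $\be_1$ term in half. The quadratic form in $(\|\tpdv{h}{\bp}\|,\|\be_1\|)$ with matrix $\begin{bmatrix}\frac{1}{\epsilon}&-\frac12\\-\frac12&\frac{k_1}{4\sigma\mu}\end{bmatrix}$ is positive semidefinite exactly when $\tfrac{k_1}{4\sigma\mu\epsilon}\ge\tfrac14$, that is, when $k_1\ge\epsilon\sigma\mu$. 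This is the first condition in \eqref{eq:safe-tracking-conditions_geometric}, and it leaves
\begin{equation*}
\dot B\ge-\alpha h+\tfrac{k_1}{4\sigma\mu}\|\be_1\|^2+\tfrac{k_2}{\mu}\|\be_2\|^2 .
\end{equation*}

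Next substitute $h=B+\tfrac{1}{\mu}V=B+\tfrac{1}{\mu}\big(\tfrac{1}{2\sigma}\|\be_1\|^2+V_2\big)$. This yields
\begin{equation*}
\dot B\ge-\alpha B+\tfrac{1}{2\sigma\mu}\big(\tfrac{k_1}{2}-\alpha\big)\|\be_1\|^2+\tfrac{1}{\mu}\big(k_2\|\be_2\|^2-\alpha V_2\big).
\end{equation*}
The $\be_1$ term is nonnegative when $k_1\ge2\alpha$. The $V_2$ term is where this proof departs from Theorem~\ref{th:safety_robot}, because $V_2$ is not a multiple of $\|\be_2\|^2$. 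Here I use the invariance of $\mathcal{N}_\rho$: along the solution, $k_2\|\be_2\|^2\ge k_2(2-\rho)V_2\ge\alpha V_2$ whenever $k_2\ge\alpha/(2-\rho)$. This is exactly the second condition in \eqref{eq:safe-tracking-conditions_geometric}.

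Hence $\dot B(\bx(t),t)\ge-\alpha B(\bx(t),t)$ for all $t\in I(\bx_0)$. Since $\bx_0\in\mathcal{S}(t_0)$ gives $B(\bx_0,t_0)\ge0$, the comparison lemma \cite[Lem.~3.4]{Khalil} yields $B(\bx(t),t)\ge e^{-\alpha(t-t_0)}B(\bx_0,t_0)\ge0$. Because $V\ge0$, we get $h(\bp(t))\ge\tfrac{1}{\mu}V\ge0$, so $\bp(t)\in\mathcal{C}$ on $I(\bx_0)$. The main obstacle is the $V_2$ versus $\|\be_2\|^2$ comparison. It holds only on $\mathcal{N}_\rho$, so the argument must lean on the forward invariance of $\mathcal{N}_\rho(t)$ from Theorem~\ref{th:geometric_controller}, rather than on a bound valid for every state as in Theorem~\ref{th:safety_robot}.
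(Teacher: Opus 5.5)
Your proposal is correct and follows the paper's proof essentially step for step. It uses the same decomposition $\bv=\bv_{\rm{d}}+\be_1$, the same positive semidefinite quadratic form giving $k_1\ge\epsilon\sigma\mu$, and the same substitution $h=B+\tfrac{1}{\mu}V$. It also uses the same bound $V_2\le\tfrac{1}{2-\rho}\|\be_2\|^2$, valid on $\mathcal{N}_\rho$, whose forward invariance both you and the paper take from the proof of Theorem~\ref{th:geometric_controller}, before closing with the comparison lemma.
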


\begin{proof}
Taking time derivative of $B$ yields:
\begin{equation*}
        \begin{aligned}
            \dot{B}(\bx,t) = & \tpdv{h}{\bp}(\bp)\bv - \tfrac{1}{\mu}\dot{V}(\bx,t) \\ 
            = & \tpdv{h}{\bp}(\bp)\bv_\mathrm{d} + \tpdv{h}{\bp}(\bp)\be_1- \tfrac{1}{\mu}\dot{V}(\bx,t).
        \end{aligned}
\end{equation*}
Lower bounding the above yields:
\begin{equation*}
    \begin{aligned}
        \dot{B} \!\geq\! & -\alpha h  \!+\! \tfrac{1}{\epsilon}\|\tpdv{h}{\bp}\|^2 - \|\tpdv{h}{\bp}\|\|\be_1\| \!+\! \tfrac{k_1}{2\sigma\mu}\|\be_1\|^2
        +\tfrac{k_2}{\mu}\|\be_2\|^2,
    \end{aligned}
\end{equation*}
where we used \eqref{eq:safe-desired-vector field} and \eqref{eq:vdot_exact}. Now, note that:
\begin{equation*}
    \begin{aligned}
        \tfrac{1}{\epsilon}\|\tpdv{h}{\bp}\|^2 - \|\tpdv{h}{\bp}\|\|\be_1\| + \tfrac{k_1}{4\sigma\mu}\|\be_1\|^2 \\ 
        = 
        \begin{bmatrix}
          \|\pdv{h}{\bp}\| & \|\be_1\| 
        \end{bmatrix} 
        \begin{bmatrix}
            \frac{1}{\epsilon} & -\frac{1}{2}\\ -\frac{1}{2} & \frac{k_1}{4\sigma\mu}
        \end{bmatrix}
        \begin{bmatrix}
           \|\pdv{h}{\bp}\| \\ \|\be_1\| 
        \end{bmatrix}.
    \end{aligned}
\end{equation*}
If $k_1\ge \epsilon\sigma\mu$, the above matrix is positive semi-definite, allowing $\dot{B}$ to be further lower bounded as:
% \begin{equation*}
%     \dot{B}(\bx,t) \ge -\alpha h + \tfrac{k_1}{4\mu}\|\be_1\|^2 +\tfrac{k_2}{\mu}\|\be_2\|^2.
% \end{equation*} Now, using the definition of $B$ from \eqref{eq:B}, and the Lyapunov function \eqref{eq:geometric_lyap} we have:
% % Using the definition of $B(\bz, t)$ and lower bounding it gives:
% \begin{equation*}
%     B(\bz, t) \ge h(\bp)-\tfrac{1}{2\mu}\|\bv-\bv_{\mathrm{d}}\|^2-
%     \tfrac{1}{\mu}V_2,
% \end{equation*}
% which allows $\dot{B}$ to be further bounded as:
% which leads to:
% \begin{align}
%      -\alpha[h(\bx)-\frac{1}{2}\btilde{\vartheta}\T\bm\Gamma^{-1}\btilde{\vartheta}] \ge -\alpha[ B(\bz, t)+\frac{\overline{M}}{2\mu}\|\bs\|^2+\frac{1}{2\mu}\btilde{\nu}\T\bm\Gamma^{-1}\btilde{\nu}],
% \end{align}
% resulting in:
\begin{equation*}
    \begin{aligned}
        \dot{B} \geq & -\alpha h  + \tfrac{k_1}{4\sigma\mu}\|\be_1\|^2+\tfrac{k_2}{\mu}\|\be_2\|^2 \\
        \geq & -\alpha B + \tfrac{1}{2\sigma\mu}[\tfrac{k_1}{2} - \alpha]\|\be_1\|^2 -\tfrac{\alpha}{\mu} V_2 + \tfrac{k_2}{\mu}\|\be_2\|^2  \\
        \geq & -\alpha B + \tfrac{1}{2\sigma\mu}[\tfrac{k_1}{2} - \alpha]\|\be_1\|^2 + \tfrac{1}{\mu}(k_2-\tfrac{\alpha}{2-\rho} )\|\be_2\|^2,
    \end{aligned}
\end{equation*}
where the third inequality follows from the conclusions of Theorem \ref{th:geometric_controller} and the fact that $V_2 \le \tfrac{1}{2-\rho}\|\be_2\|^2$ for $\bx\in\mathcal{N}_{\rho}$. Thus, if $k_1\geq 2\alpha$, and $k_2 \ge \tfrac{\alpha}{2-\rho}$ hold, we may bound $\dot{B}$ as:
\begin{equation}\label{eq:B-dot-geometric-safety}
    \dot{B}(\bx,t) \ge -\alpha B(\bx, t),\quad\forall (\bx,t)\in 
    % \mathcal{M} 
    \mathcal{N}_{\rho}(t)
    \times \mathbb{R},
\end{equation}
If $\bx_0\in\mathcal{N}_{\rho}(t_0)$ it follows from the proof of Thm. \ref{th:geometric_controller} that solutions exist on a maximal interval of existence and satisfy $\bx(t)\in\mathcal{N}_{\rho}(t)$ for $t\in I(\bx_0)$. Thus, \eqref{eq:B-dot-geometric-safety} holds along solutions. The remainder of the proof follows that of Thm. \ref{th:safety_robot}.
% Similar remarks regarding existence of a unique maximal solution and applying comparison lemma to imply $ \bx(t) \in \mathcal{S}, \forall t\in I(\bx_0)$ for system \eqref{eq:reduced_model_omega} also apply here.
\end{proof}

\begin{remark}
The intuition behind the sufficient gain conditions in Theorems \ref{th:safety_robot} and \ref{th:safety_geometric_robot} is: 1) initialization within the safe set and where the dynamics are well-defined; 2) vector field tracking should be sufficiently fast relative to the variations in the safe vector field, captured by $\alpha$.
\end{remark}

% \noindent Moreover,
% $
% \dot V_2=0
% \iff
% \be_1=\mathbf0,
% \;
% e_2=0.
% $
% Let
% $
% E
% \coloneqq
% \left\{
% (\be_1,e_2)\in D:
% \be_1=\mathbf0,\;
% e_2=0
% \right\},
% $
% which is the largest invariant subset of
% $
% \left\{
% (\be_1,e_2)\in D:
% \dot V_2=0
% \right\}.
% $
% Therefore, by LaSalle's invariance principle,
% $
% \be_1(t)\rightarrow\mathbf0
% $
% and
% $
% e_2(t)\rightarrow0
% $
% as
% $
% t\rightarrow\infty.
% $
% Equivalently,
% $
% \bv(t)-\bm\beta_1(t)\rightarrow\mathbf0
% $
% and
% $
% 1-\bR_3^{T}(t)\bR_{3\mathrm d}(t)\rightarrow0.
% $
% Hence,
% $
% \bR_3(t)\rightarrow\bR_{3\mathrm d}(t),
% $
% for all initial conditions satisfying
% $
% \bR_3^{T}(0)\bR_{3\mathrm d}(0)>-1.
% $
% \noindent Therefore, the origin of the closed-loop error dynamics is almost globally exponentially stable.

\section{Numerical Simulations}
In this section, we demonstrate the effectiveness of our vector field tracking controllers via numerical simulations. 
% We first evaluate their reference trajectory tracking performance and then demonstrate constraint enforcement using both the dynamic extension and geometric controllers.

\textbf{Trajectory Tracking.}
We begin by illustrating the ability of our controllers to track vector fields without any safety considerations.
Here, the quadrotor mass is $m = 1 kg$ and the initial condition is: \[
\bp_0 = \bm 0, \quad \bv_0 = \bm 0,\quad \bR_0 =
\begin{bsmallmatrix}
1 & 0 & 0 \\
0 & \cos{\theta_0} & -\sin{\theta_0} \\
0 & \sin{\theta_0} & \cos{\theta_0}
\end{bsmallmatrix},
\]
and $\theta_0 = 175^\circ$, which represents a nearly inverted initial attitude. For the DE controller, the extended state initial condition is $T_0= 10$. The desired vector field is 
$\bv_\mathrm{d}(\bp,t) = -\lambda(\bp -\bp_\mathrm{d}(t)) + \dot{\bp}_\mathrm{d}(t)$, $\lambda=1$, encoding tracking of a reference trajectory $\bp_\mathrm{d}(t) = [\cos{2t} \; \sin{2t} \; 1]\T \mathrm{m}$. The controller gains are $k_1=k_2 = 4$ for the DE controller, and $k_1 =4,k_2=2$ for the geometric controller. Fig.~\ref{fig:de_track} shows the DE controller, and Fig.~\ref{fig:geometric_track} shows the geometric controller results.  
Simulation results show that the velocity errors exponentially converge to zero in accordance with Theorems \ref{th:de_controller} and \ref{th:geometric_controller}, even when initialized with a large rotational error.

\begin{figure}[h]
    \centering
    % \vspace{-2mm}
    \includegraphics[width=0.8\columnwidth]{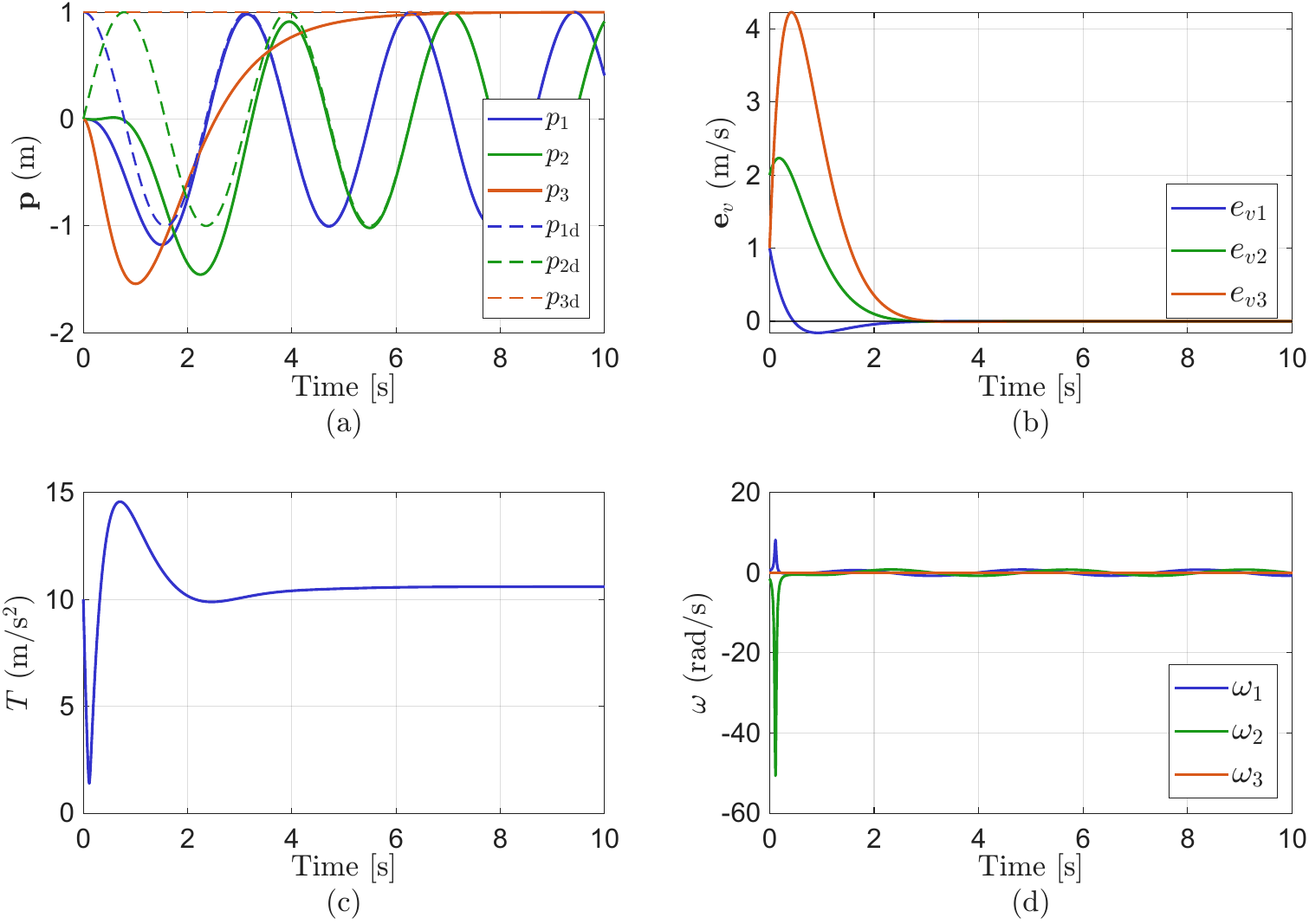}
    \vspace{-4mm}
    \caption{\textbf{DE} controller Tracking. (a) shows quadrotor position $\bp$ (solid), the desired position $\bp_\mathrm{d}$ (dashed). (b) shows velocity errors, (c) shows the total thrust, and (d) shows input angular velocities $\bm \omega$.}
    \label{fig:de_track}
    % \vspace{-2mm}
\end{figure}

\begin{figure}[h]
    \centering
    % \vspace{-2mm}
    \includegraphics[width=0.8\columnwidth]{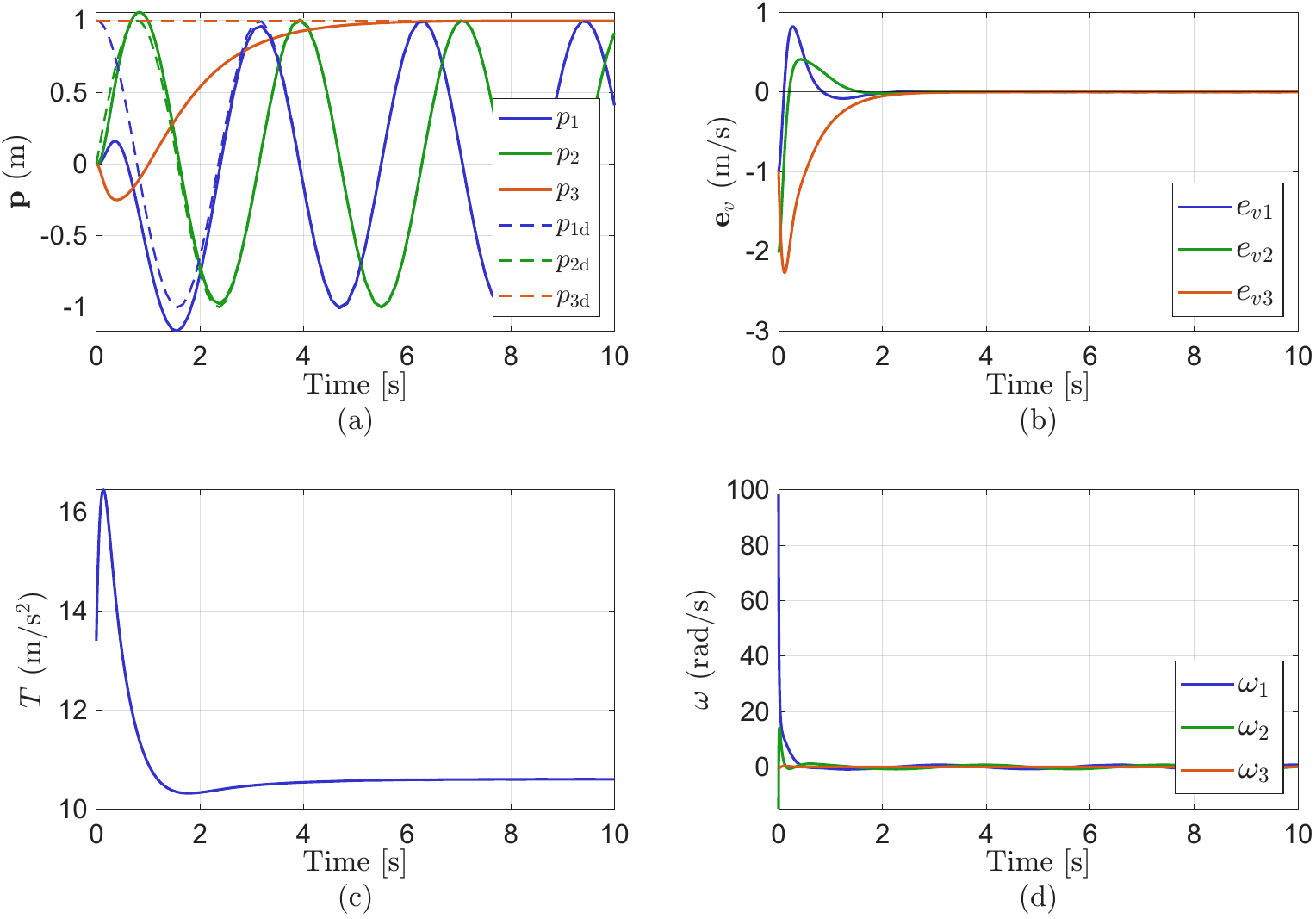}
    \vspace{-4mm}
    \caption{\textbf{Geometric} controller tracking. The plots have the same interpretation as those in Fig. \ref{fig:de_track}.
    % (a) shows quadrotor position $\bp$ (solid), the desired position $\bp_\mathrm{d}$ (dashed). (b) shows velocity errors, (c) shows the total thrust, and (d) shows input angular velocities $\bm \omega$.
    }
    \label{fig:geometric_track}
    \vspace{-2mm}
\end{figure}

\textbf{Safety Enforcement.}
Next, the objective is to enforce the position constraints $p_\mathrm{m} = \pm 0.8\,\mathrm{m}$, namely $\vert p_1 \vert \le p_\mathrm{m}$, and $\vert p_2 \vert \le p_\mathrm{m}$, $\forall t\ge t_0$, while tracking the desired reference trajectory $\bp_\mathrm{d}$. The position constraints are incorporated through the smooth combination \cite{TamasLCSS23}:
\begin{equation}
h(\bp) = -\tfrac{1}{\lambda_h}
\log\big(
e^{-\lambda_\mathrm{h}(p_\mathrm{m}^2 - p_1^2)}
+
e^{-\lambda_\mathrm{h}(p_\mathrm{m}^2 - p_2^2)}
\big),
\end{equation}
which defines the safe set $\mathcal{C}$ as in \eqref{eq:C-quadrotor}. A smooth safety filter \cite{CohenLCSS23,CohenARC24} is then applied to the nominal $\bv_{\rm{d}}$ to generate a safe and differentiable velocity-field that satisfies \eqref{eq:safe-desired-vector field}.

For these tests, the initial rotation error is set to $\theta_0=\tfrac{\pi}{2}$. The controller gains are selected as $k_1=k_2=20$, while the safety-filter parameters are set to $\alpha=7.5$, $\lambda_\mathrm{h} = 5$, $\epsilon=2$, and $\mu=10$. Fig. \ref{fig:de_safety} shows results for the the dynamic extension controller. For the geometric controller $k_1=k_2=20$, the safety-filter parameters are set to $\alpha=5$, $\lambda_\mathrm{h} = 2$, $\epsilon=2$, and $\mu=10$. Fig. \ref{fig:geometric_safety} shows the results for the geometric controller.
Simulations show that, by leveraging the proposed vector field tracking controllers, safety properties can be effectively encoded into the desired vector field using CBFs and subsequently transferred to the full quadrotor dynamics.

\begin{figure}[h]
    \centering
    % \vspace{-2mm}
    \includegraphics[width=0.8\columnwidth]{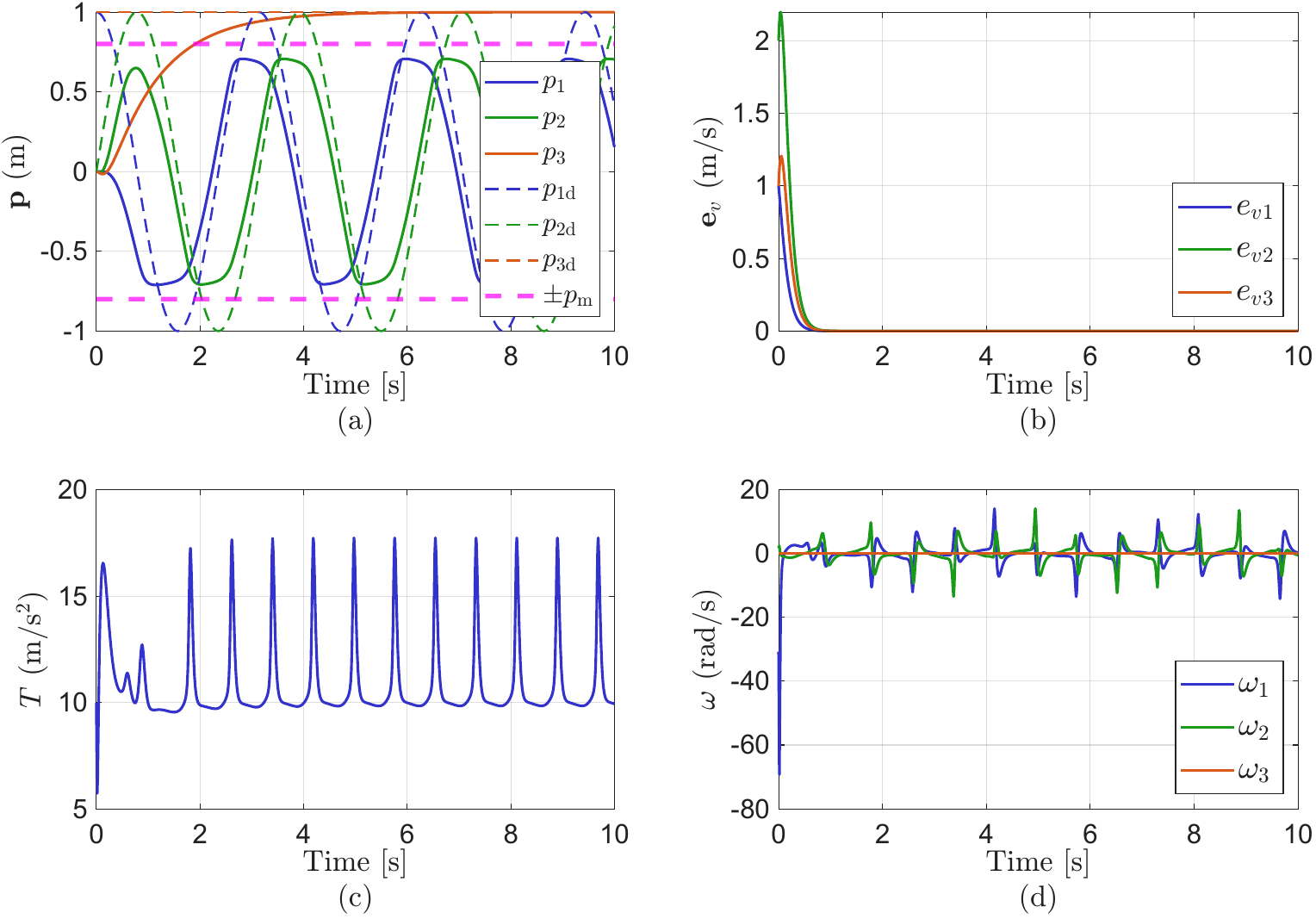}
    \vspace{-4mm}
    \caption{\textbf{DE} controller safe tracking. (a) shows quadrotor position $\bp$ (solid), the desired position $\bp_\mathrm{d}$ (dashed), and the constraint lines $\pm p_\mathrm{m}$ (hlines). (b) shows velocity errors, (c) shows the total thrust, and (d) shows input angular velocities $\bm \omega$.}
    \label{fig:de_safety}
    \vspace{-2mm}
\end{figure}

\begin{figure}[h]
    \centering
    % \vspace{-2mm}
    \includegraphics[width=0.8\columnwidth]{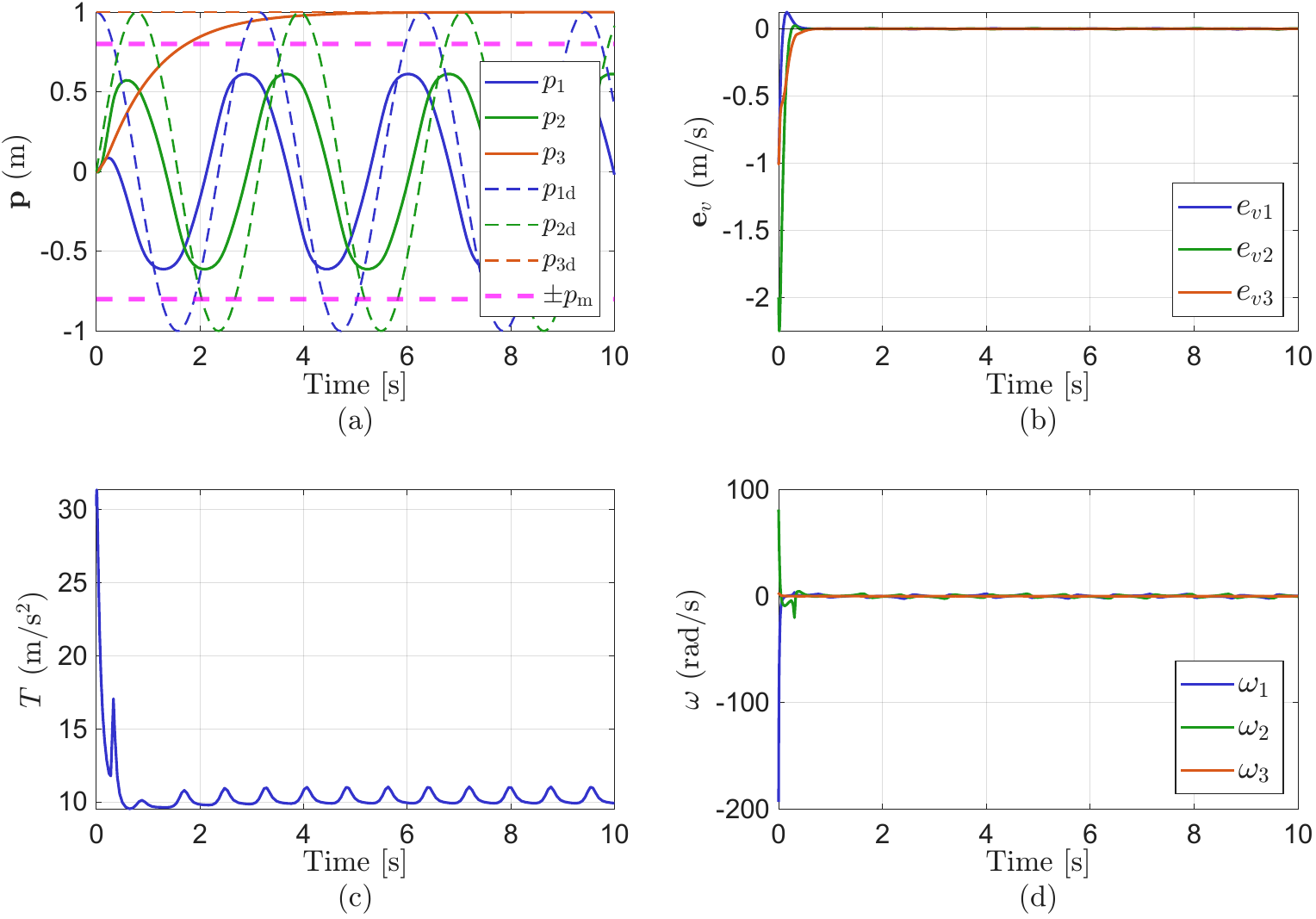}
    \vspace{-4mm}
    \caption{\textbf{Geometric} controller safe tracking. The interpretation of the plots are the same as those in Fig. \ref{fig:de_safety}.
    % (a) shows quadrotor position $\bp$ (solid), the desired position $\bp_\mathrm{d}$ (dashed), and the constraint lines $\pm p_\mathrm{m}$ (hlines). (b) shows velocity errors, (c) shows the total thrust, and (d) shows input angular velocities $\bm \omega$.
    }
    \label{fig:geometric_safety}
    \vspace{-2mm}
\end{figure}

\section{Experimental Validation}
In this section, we show the effectiveness of the proposed vector field tracking controllers and safety enforcement by conducting indoor flight experiments with a Crazyflie 2.1. A video of our experiments can be viewed at \cite{FlightTests}.

\textbf{Trajectory Tracking.}
First, we perform flight tests to show the vector field tracking performance. The drone is initialized at rest on level ground at the origin. The reference trajectory is $\bp_\mathrm{d}(t)=[\tfrac{1}{2}\cos{\beta t}\;\tfrac{1}{2} \sin{\beta t}\; 0.25]\T$, and $\beta = \tfrac{2\pi}{5}$. Then the desired vector field is $\bv_\mathrm{d}(\bp,t)=\dot \bp_\mathrm{d}-\lambda(\bp-\bp_\mathrm{d})$, and $\lambda=2.5$. For the DE controller $k_1=k_2=20$, and for the geometric controller $k_1=k_2=10$. 
Fig. \ref{fig:de_exp_track} shows the DE controller performance, and Fig. \ref{fig:geometric_exp_track} shows the geometric controller results, which indicate that 
the vector field tracking controllers track the desired trajectory.

\begin{figure}[h]
    \centering
    % \vspace{-2mm}
    \includegraphics[width=0.8\columnwidth]{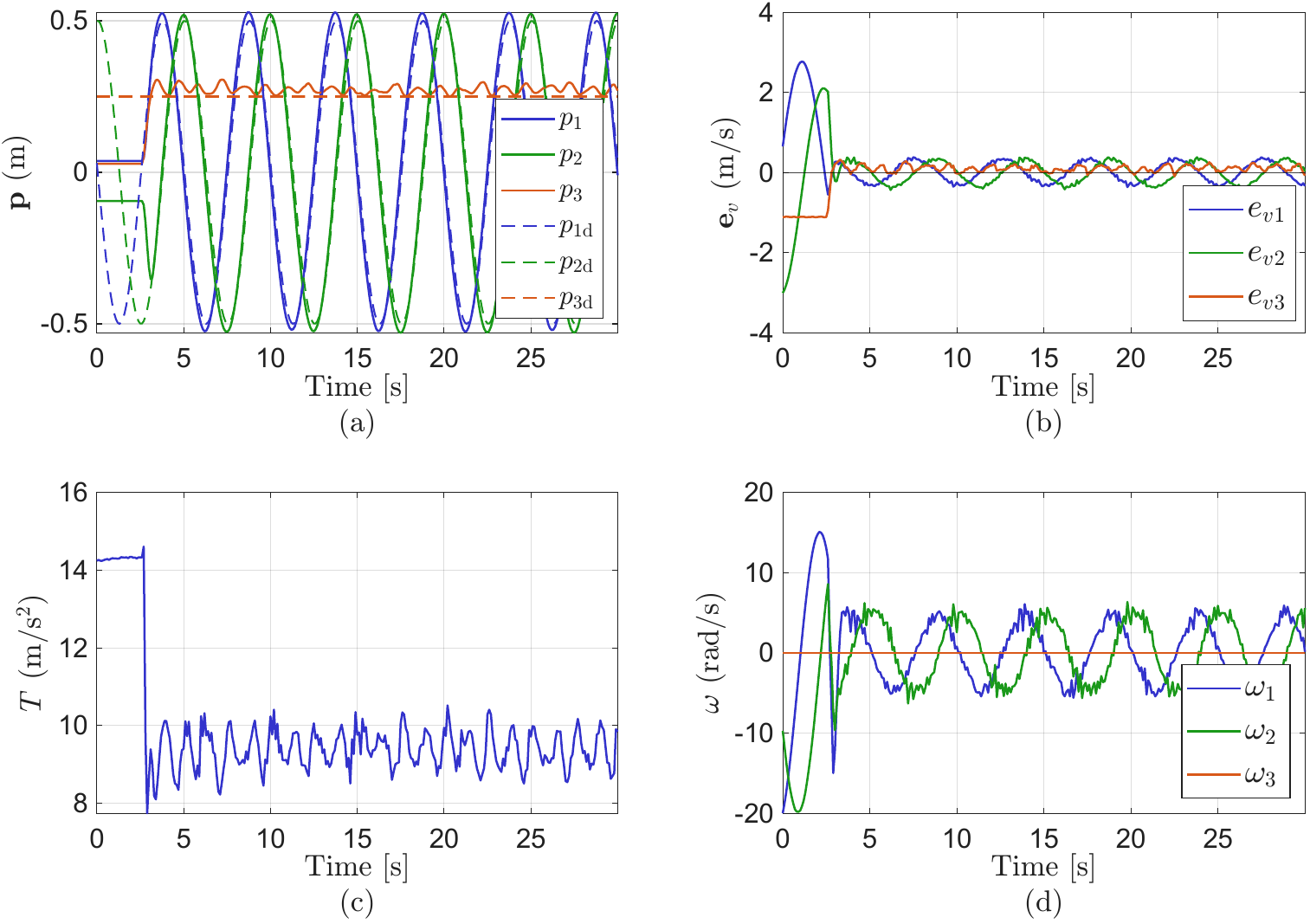}
    \vspace{-4mm}
    \caption{\textbf{DE} controller reference tracking experiment. (a) shows position $\bp$ (solid), the desired position $\bp_\mathrm{d}$ (dashed). (b) shows velocity errors, (c) shows the total thrust, and (d) shows input angular velocities $\bm \omega$.}
    \label{fig:de_exp_track}
    \vspace{-2mm}
\end{figure}

\begin{figure}[h]
    \centering
    % \vspace{-2mm}
    \includegraphics[width=0.8\columnwidth]{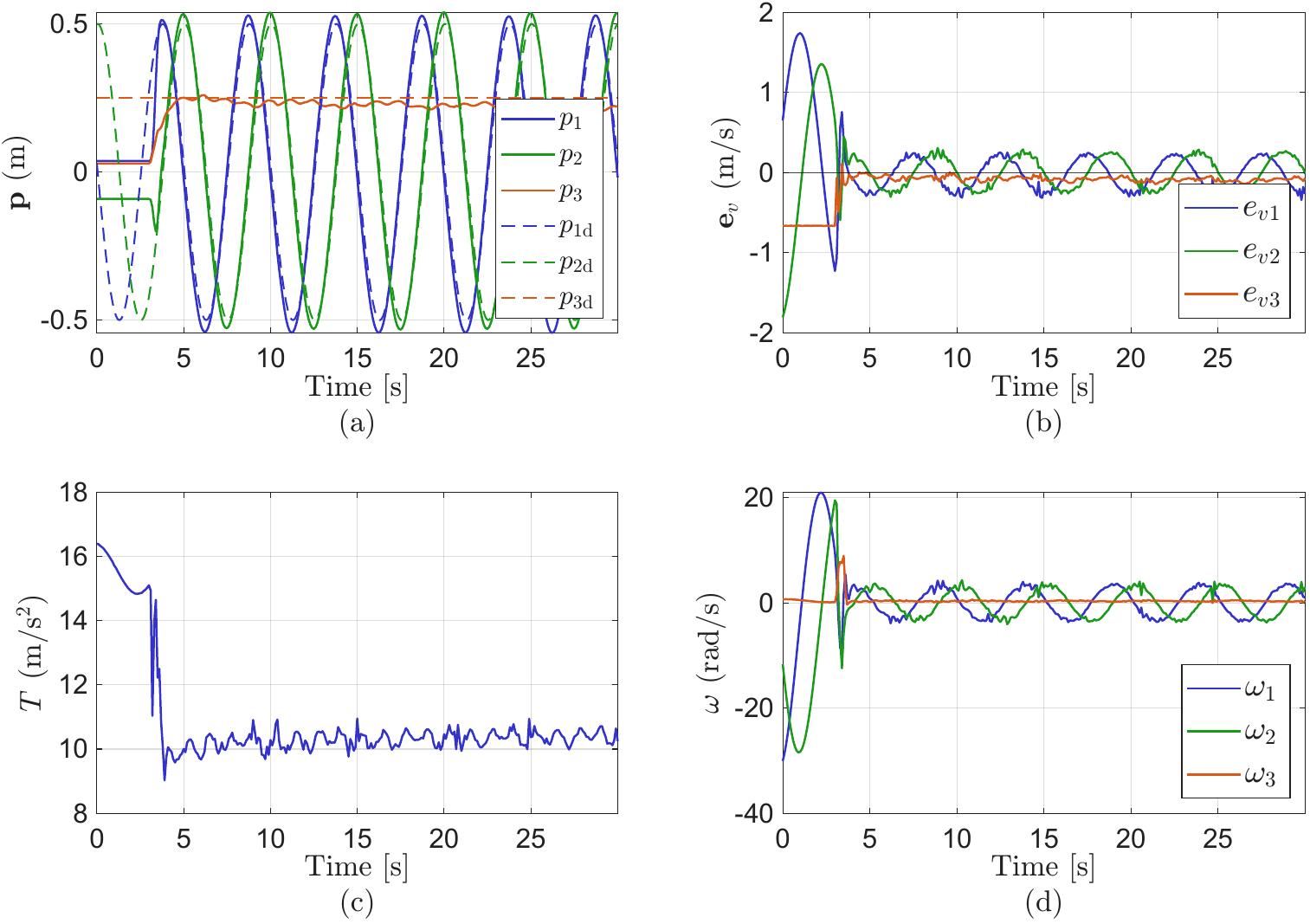}
    \vspace{-4mm}
    \caption{\textbf{Geometric} controller reference tracking experiment. 
    The interpretation of the plots are the same as in Fig. \ref{fig:de_exp_track}.
    % (a) shows position $\bp$ (solid), the desired position $\bp_\mathrm{d}$ (dashed). (b) shows velocity errors, (c) shows the total thrust, and (d) shows input angular velocities $\bm \omega$.
    }
    \label{fig:geometric_exp_track}
    \vspace{-6mm}
\end{figure}

\textbf{Safety Enforcement.}
We now show that imposing CBF conditions on the desired vector fields leads to the satisfaction of position constraints. 
% The reference trajectory is the same as above.
% $\bp_\mathrm{d}(t)=[\tfrac{1}{2}\cos{\beta t},\quad\tfrac{1}{2} \sin{\beta t},\quad 0.25]\T$, and $\beta = \tfrac{2\pi}{5}$. 
The objective is to enforce the position constraints $p_\mathrm{m} = \pm 0.4\,\mathrm{m}$, namely $\vert p_1 \vert \le p_\mathrm{m}$, and $\vert p_2 \vert \le p_\mathrm{m}$, while tracking the desired reference trajectory $\bp_\mathrm{d}$ as above. The CBF parameters are $\lambda_h=10$, $\alpha = 5$, $\epsilon=1$, and $\mu=10$. Fig. \ref{fig:de_exp_safety} shows enforcing position constraints with the DE controller, and Fig. \ref{fig:geometric_exp_safety} shows the geometric controller results. Results show that the desired safety behavior is successfully encoded into the desired vector fields and accurately tracked by the controllers.

\begin{figure}[h]
    \centering
    % \vspace{-2mm}
    \includegraphics[width=0.8\columnwidth]{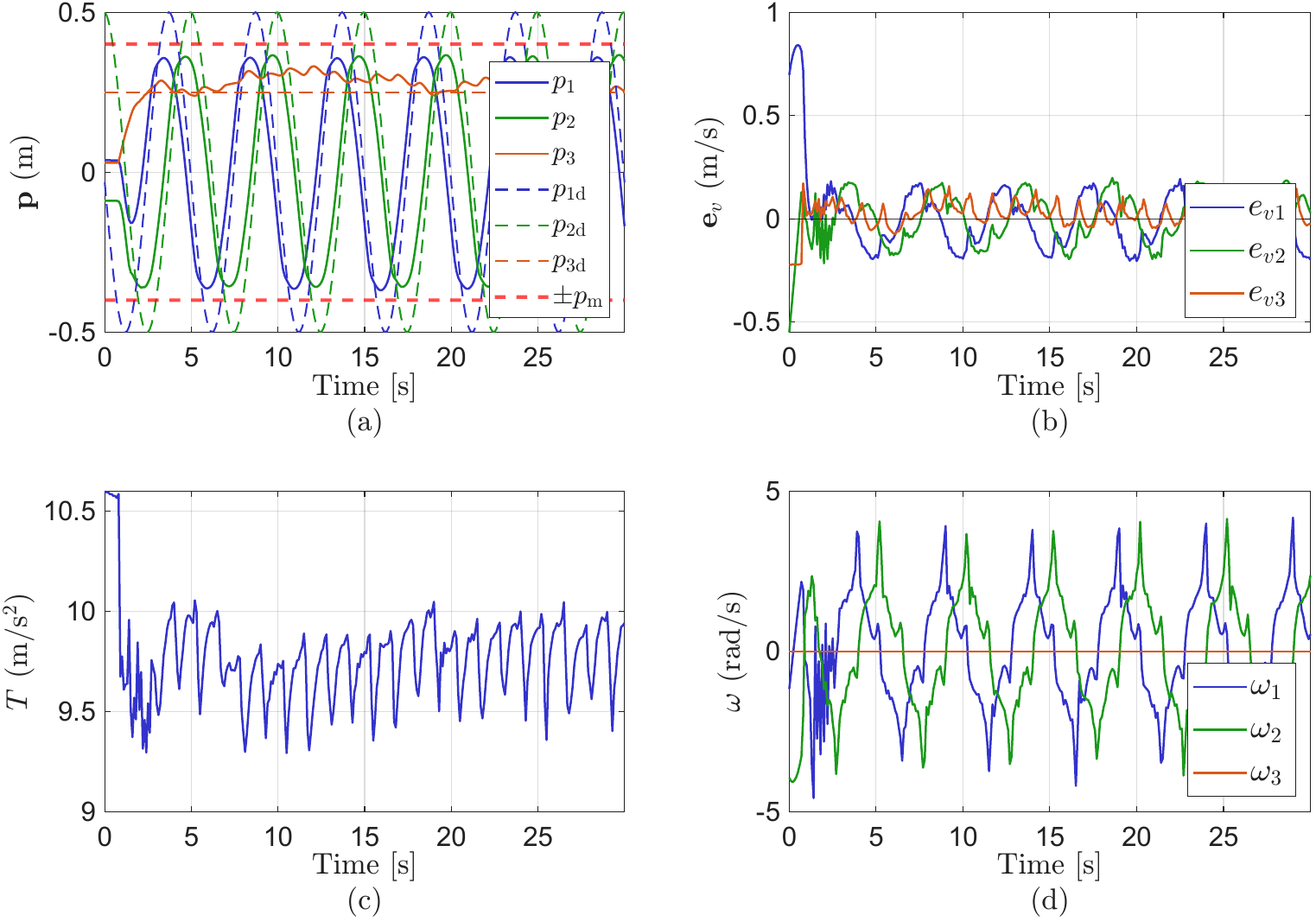}
    \vspace{-4mm}
    \caption{\textbf{DE} controller safe reference tracking experiment. (a) shows  position $\bp$ (solid), the desired position $\bp_\mathrm{d}$ (dashed), and the constraints $\pm p_\mathrm{m}$ (hlines). (b) shows velocity errors, (c) shows the total thrust, and (d) shows input angular velocities $\bm \omega$.}
    \label{fig:de_exp_safety}
    \vspace{-2mm}
\end{figure}

\begin{figure}[h]
    \centering
    % \vspace{-2mm}
    \includegraphics[width=0.8\columnwidth]{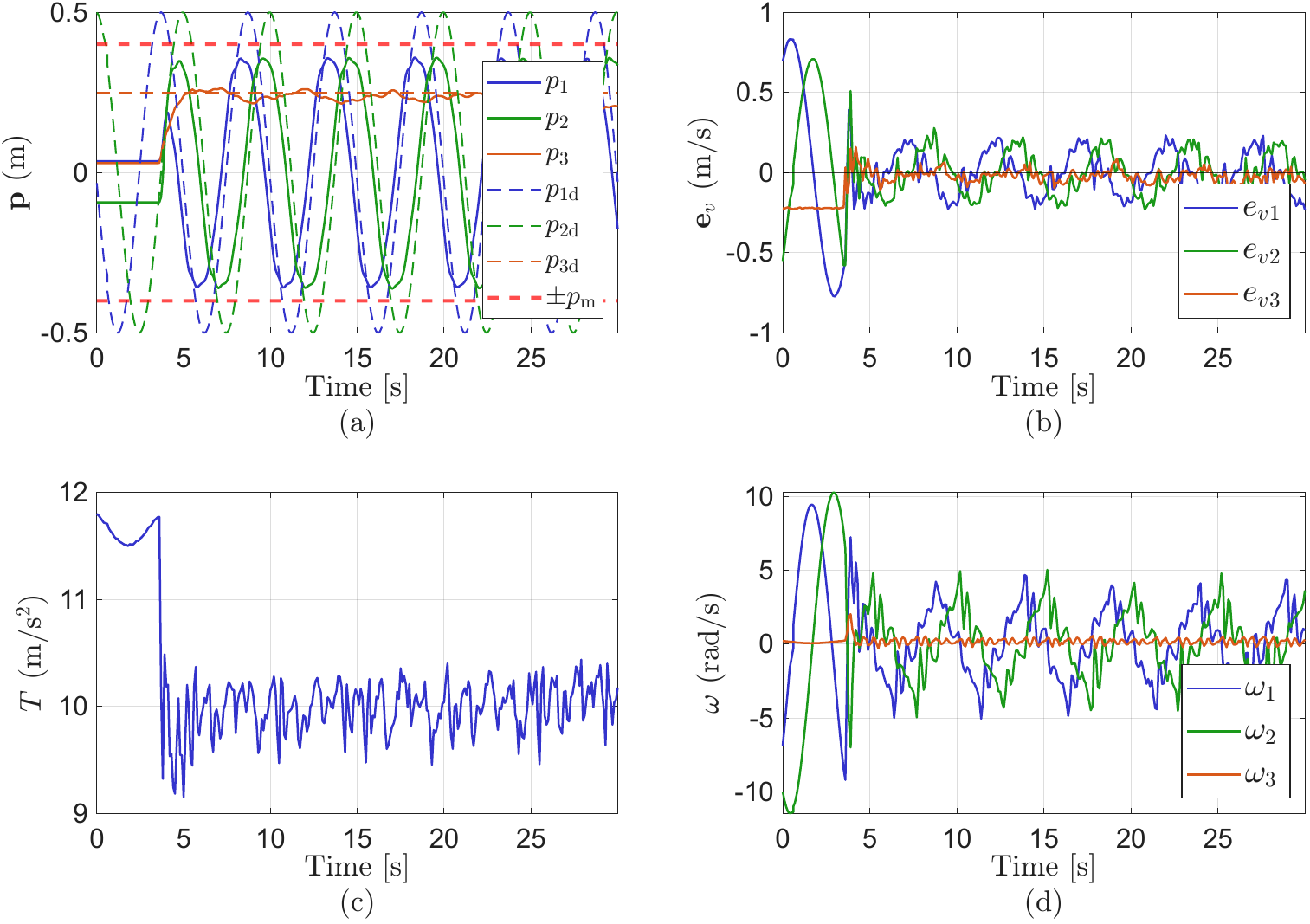}
    \vspace{-4mm}
    \caption{\textbf{Geometric} controller safe reference tracking experiment. 
    The interpretation of the plots are the same as those in Fig. \ref{fig:de_exp_safety}
    % (a) shows  position $\bp$ (solid), the desired position $\bp_\mathrm{d}$ (dashed), and the constraint lines $\pm p_\mathrm{m}$ (hlines). (b) shows velocity errors, (c) shows the total thrust, and (d) shows input angular velocities $\bm \omega$.
    }
    \label{fig:geometric_exp_safety}
    \vspace{-2mm}
\end{figure}

\section{Conclusion}
In this work, we proposed two vector field tracking controllers for quadrotors based on reduced-attitude representations. We established stability of the proposed dynamic extension and geometric controllers, followed by a framework for encoding safety into the desired vector fields. Numerical simulations and indoor flight experiments demonstrated the effectiveness of the proposed approaches.

\textbf{Acknowledgments.}
ChatGPT was used to assist in editing Fig. \ref{fig:exp_video} and the experiment video \cite{FlightTests}. We thank Massimiliano de Sa for the helpful discussions on CBFs on manifolds.

% \section*{Appendix}
% Another way to express $(\bR_3-\bR_\mathrm{3d})$ in terms of $\be_2= \bR_3 \times \bR_{3\mathrm{d}}$ to deal with the translational and rotational cross coupling term in \eqref{eq:dot-v1-ge} is to decompose it into tangential and radial  components as follows:
% \begin{equation}
%     \bR_3-\bR_\mathrm{3d}=[\bR_3]_\times \be_2+\frac{\be_2\T\be_2}{1+\bR_3\T\bR_{3\mathrm{d}}}\bR_3.
% \end{equation}
% \begin{proof}
% \begin{equation}
%     \bR_3\times\be_2 = \bR_3 \times (\bR_3\times\bR_{3\mathrm{d}}) = (\bR_3\T\bR_{3\mathrm{d}})\bR_3 - \bR_{3\mathrm{d}},
% \end{equation}
% so adding and subtracting $(\bR_3\T\bR_{3\mathrm{d}})\bR_3$ gives:
% \begin{equation}
%     \bR_3-\bR_{3\mathrm{d}} = \bR_3\times\be_2 + (1-\bR_3\T\bR_{3\mathrm{d}})\bR_3.
% \end{equation}
% Since $\be_2\T\be_2 = 1-(\bR_3\T\bR_{3\mathrm{d}})^2 = (1-\bR_3\T\bR_{3\mathrm{d}})(1+\bR_3\T\bR_{3\mathrm{d}})$, for $\bR_3\T\bR_{3\mathrm{d}}\neq-1$ we obtain:
% \begin{equation}
%     \bR_3-\bR_{3\mathrm{d}} = [\bR_3]_\times\be_2 + \frac{\be_2\T\be_2}{1+\bR_3\T\bR_{3\mathrm{d}}}\bR_3.
% \end{equation}
% \end{proof}

\bibliographystyle{ieeetr}
\bibliography{biblio}

\end{document}